\documentclass[12pt]{article}

\usepackage{amssymb}
\usepackage{amsmath}
\usepackage{graphicx}%
\usepackage{multirow}%
\usepackage{amsfonts}%
\usepackage{amsthm}%
\usepackage{mathrsfs}%
\usepackage{xcolor}%
\usepackage{enumitem}

 \newtheorem{theorem}{Theorem}[section]
 \newtheorem{proposition}{Proposition}[section]
 \newtheorem{lemma}{Lemma}[section]
 \newtheorem{corollary}{Corollary}[section]
 \newtheorem{definition}{Definition}[section]
 \newtheorem{remark}{Remark}[section]
 \newtheorem{example}{Example}[section]

 \usepackage{graphicx}
 \usepackage{amsmath}\numberwithin{equation}{section}
 \usepackage{amssymb}
 \usepackage{epsfig}
\usepackage{amsfonts}
\usepackage{natbib}
\setcitestyle{authoryear,round}

\usepackage{mathrsfs}
\usepackage{amssymb,color}
\usepackage{caption2}
\usepackage{inputenc}

\begin{document}




%
\begin{center}
{\Large Non-emptiness of the Fuzzy Core for Exchange Economies with Discontinuous Utility}
\end{center}

\begin{center}
{Qi-Qing Song\footnote{Corresponding e-mail: songqiqing@126.com},\,Jia-Lin Liu,\,Jia-Yu Cui}\\
\end{center}

\begin{center}
{School of Mathematical Sciences, Shanxi Normal University, Taiyuan, 030031, China}\\
\end{center}


{\bf Abstract}
This paper investigates the non-emptiness of the fuzzy core and introduces the notion of the TU-fuzzy core for pure exchange economies with discontinuous utility functions. We propose two stability conditions: local fuzzy-coalition stability and local TU-fuzzy-coalition stability. We also introduce an independent-$\pi$-balanced family of subsets of the agent set to construct two payoff-dependent NTU games. These constructions serve as bridges linking the fuzzy core, the TU-fuzzy core, and NTU games. Under the proposed stability conditions, we establish the non-emptiness of the fuzzy core for economies with quasi-concave utility functions and the non-emptiness of the TU-fuzzy core for economies with concave utility functions.

{\it Keyword:}
Fuzzy core, exchange economy, discontinuous, payoff-dependent, independent-$\pi$-balanced, existence


\section{Introduction}

The notion of the fuzzy core originates with \citet{aubin1979}, who introduced it as a cooperative solution concept for pure exchange economies that generalizes the classical core. In the standard formulation of \citet{edgeworth1881} and \citet{debreu1963}, coalitional blocking requires the unanimous and full participation of every member. Aubin relaxed this all-or-nothing requirement by permitting agents to join a blocking coalition with an arbitrary fraction of their resources, indexed by a coefficient $\alpha_i \in [0,1]$. A fuzzy coalition is therefore represented by a vector $\alpha = (\alpha_1, \ldots, \alpha_n) \in [0,1]^n \setminus \{\mathbf{0}\}$, where $\alpha_i$ captures the intensity of agent $i$'s participation. The fuzzy core then consists of all feasible allocations that are robust against objections raised by any such fractional coalition.

The conceptual significance of the fuzzy core lies in its capacity to deliver a cooperative characterization of competitive equilibrium without recourse to the asymptotic reasoning central to the classical Debreu-Scarf approach. Aubin demonstrated that, under standard convexity assumptions, the fuzzy core of a finite economy coincides exactly with the set of competitive equilibria. As \citet{herves2026} have recently emphasized, this equivalence can be reinterpreted through the lens of the classical Debreu-Scarf and Aumann theorems: embedding an $n$-agent economy into a continuum economy composed of $n$ distinct agent types, the fuzzy core maps precisely onto the set of Edgeworth allocations of the continuum economy, which in turn coincides with the Walrasian allocations.

The existence of a nonempty fuzzy core has been established under increasingly general conditions. Aubin's original theorem required convexity of preferences and finite-dimensional commodity spaces. \citet{florenzano2003} significantly extended this result to infinite-dimensional commodity spaces and to economies with a non-trivial production sector, while relaxing the assumptions on consumer preferences by allowing them to be incomplete or intransitive. \citet{predtetchinski2005fuzzy} offered an alternative proof for economies with continuous and quasi-concave utility functions, based on the $(\Pi,\beta)$-balanced core for non-transferable utility (NTU) games (closely related to the results in \citet{Hering2004necessary}), an important case of the game introduced by \citet{bonnisseau2003payoff,bonnisseau2007payoff}. The proof proceeds by associating with the exchange economy an NTU game whose $(\Pi,\beta)$-balanced core is nonempty; this balanced core is then shown to correspond to the $\epsilon$-fuzzy core of the economy, and the non-emptiness of the fuzzy core itself follows by passing to the limit as $\epsilon \to 0$. \citet{allouch2008} later provided an elementary fixed-point proof that avoids any reliance on asymptotic intersections or limits of replica-economy cores, for economies with continuous and quasi-concave utility functions. By means of a necessary and sufficient condition for fuzzy NTU games, \citet{liu2013} proved the non-emptiness of the fuzzy core for an economy with continuous and concave utility functions. More recently,  \citet{marakulin2022} contributed yet another proof based on Michael's continuous selection theorem and Brouwer's fixed point theorem, requiring only minimal assumptions that accommodate incomplete, non-transitive, or satiated preferences.

For economies with upper semicontinuous utility functions, \citet{liu2022,liu2023} proved the non-emptiness of the fuzzy core in finite coalition production economies with infinite-dimensional commodity spaces, thus generalizing earlier work by \citet{predtetchinski2005fuzzy}. To further study the non-emptiness of the fuzzy core of an exchange economy with discontinuous utility functions, we intend to recall some closely related results on the existence of equilibrium points for games with discontinuous payoffs, such as $\alpha$-cores and Nash equilibrium points.

The existence of the $\alpha$-core (proposed by \citet{aumann1961}) for discontinuous normal-form games was first established in \citet{uyanik2015}, where the notion of coalitional $C$-security plays a pivotal role. In the same paper, Uyan{\i}k (2015) also obtained the existence of the $\alpha^T$-core under coalitional $C^T$-security together with a weak separability condition, which is closely related to the weak separability introduced in \citet{zhao1999b}. The coalitional $C$-security proposed by \citet{uyanik2015} draws on various security concepts from discontinuous noncooperative games, including better-reply security (\citet{reny1999}), $C$-security (\citet{mclennan2011}), and continuous security (\citet{barelli2013}). More recently, \citet{scalzo2020ijgt} introduced the coalitional deviation property to establish a necessary and sufficient condition for the non-emptiness of the $\alpha$-core, while \citet{scalzo2020existence} proposed the slight single deviation property to prove the existence of strong equilibria. For further results on the existence of Nash equilibria with discontinuous payoffs, we refer to \citet{Prokopovych2011,NessahTian2016} and the references therein. The non-emptiness of the $\alpha$-core in games with discontinuous payoffs can be found in \citet{yang2018fuzzygames,jia2023fuzzy,song2024structure,song2024alpha,chen2024existence,Chi2026}, among others.

The stability conditions for deriving the above equilibrium results under discontinuous payoffs are instructive. These stability conditions, especially the coalitional $C$-security and the coalitional deviation property, motivate us to propose new conditions for the fuzzy core.

In this paper, we further study the non-emptiness of the fuzzy core of an exchange economy. The concept of the TU-fuzzy core of an exchange economy is introduced. A locally fuzzy-coalition stable condition and a locally TU-fuzzy-coalition stable condition for economies are proposed. By introducing an independent-$\pi$-balanced family of subsets of the set of agents, two payoff-dependent cores of NTU games are proved, which bridge the fuzzy core and the TU-fuzzy core of exchange economies.

The rest of the paper is organized as follows. Section \ref{sec-2} recalls the concepts of the fuzzy core of an exchange economy and the payoff-dependent core of NTU games. Furthermore, an example shows that the fuzzy core of an exchange economy can be nonempty even when the utility functions are neither upper semi-continuous nor lower semi-continuous. Section \ref{sec-3} introduces the concepts of an independent-$\pi$-balanced family and TU-fuzzy cores, as well as locally (TU-)fuzzy-coalition stable conditions. The non-emptiness of the fuzzy core and the TU-fuzzy core is established for an exchange economy.

\section{Fuzzy core and payoff-dependent balancedness}\label{sec-2}

\subsection{Fuzzy core of exchange economies}

In a market, let $N=\{1,2,\cdots,n\}$ be the set of agents, and let $\mathcal{N}$ be the set of nonempty subsets of $N$. The commodity space is denoted by $\mathrm{L}$, which is a real linear Hausdorff topological space. The triple $\mathcal{E}=\{X_{i},w_i,u_i\}_{i\in N}$ defines a pure exchange economy, in which, for each agent $i\in N$, the consumption set of the agent is $X_{i}$, a subset of $\mathrm{L}$; the initial endowment of commodities is denoted by $w_{i}\in \mathrm{L}$; and the utility function of the agent is given by $u_{i}: X_{i}\to \mathbb{R}$.

For each $S\in \mathcal{N}$, let $X_{S}=\times_{i\in S}X_{i}$, and let $x_S$ denote a joint status in $X_S$ with $x_S=(x_{S,i})_{i\in S}$ or $x_S=(x_{i})_{i\in S}$. For short, denote $X_N$ by $X$. In order to define fuzzy cores and some balancedness notions in an economy, we need some notation. For $S\in \mathcal{N}$, let $\Delta_{S} = \{\pi\in \mathbb{R}^{n} :  \pi_{i}\geq 0 \text{ for all } i\in S \text{ and } \sum_{i\in S}\pi_{i} = 1\}$. Clearly, $\Delta_{N}$ is the unit simplex in $\mathbb{R}^{n}$, and $m_N$ denotes the barycenter of $\Delta_{N}$, that is, $m_N=(\frac{1}{n},\frac{1}{n},\cdots,\frac{1}{n})\in \Delta_{N}$. Here, for each $\pi\in \Delta_{N}$, $\pi$ is a fuzzy coalition, and each $\pi_i$ denotes the participation rate of agent $i\in N$. $\Delta_{N}$ denotes all possible fuzzy coalitions. For each $\pi\in \Delta_N$, the symbol $\operatorname{carr}(\pi)$ denotes the carrier of $\pi$, that is, $\operatorname{carr}(\pi)=\{i\in N: \pi_i>0\}$.

We consider the following two assumptions:
\begin{enumerate}[label=(A\arabic*), ref=(A\arabic*)]
\item\label{items:A1} For each $i\in N$, the set $X_{i}$ is a closed convex set with $w_{i}\in X_i$.
\item\label{items:A2} The set $F(\mathcal{E}) = \{x\in X :  \sum_{i\in N}x_{i} = \sum_{i\in N}w_{i}\}$ of feasible commodity allocations is compact.
\end{enumerate}

Note that under Assumptions \ref{items:A1} and \ref{items:A2}, the set of feasible commodity allocations, $F(\mathcal{E})$, is compact and convex.

\begin{definition}\label{def0}
Given an economy $\mathcal{E}=\{X_{i},w_i,u_i\}_{i\in N}$, an allocation $x \in X$ is blocked by $S \subseteq N$ if there exist $\pi \in \Delta_S$ with $\pi_{i}>0$ for all $i\in S$, and $y \in X_S$ with $\sum_{i \in S} \pi_{i} y_{i} = \sum_{i \in S} \pi_i w_i$, such that $u_{i} (y_{i}) > u_{i} (x_i)$ for all $i \in S$.
\end{definition}

\begin{remark}
In Definition \ref{def0}, the allocation $x$ is also said to be blocked by a fuzzy coalition $\pi \in \Delta_S$ with carrier $S$.
\end{remark}

\begin{definition}\label{def1}
Given an economy $\mathcal{E}=\{X_{i},w_i,u_i\}_{i\in N}$, an allocation $x \in F(\mathcal{E})$ is an element of the fuzzy core $C_f(\mathcal{E})$ of the economy $\mathcal{E}$ if there exists no $S \in \mathcal{N}$ that blocks the allocation $x$.
\end{definition}

From Definition \ref{def1}, no fuzzy coalition $\pi$ with any $S\in \mathcal{N}$ as its carrier can block the allocation $x \in C_f(\mathcal{E})$. That is, there is no fuzzy coalition $\pi\in \Delta_N$ that can block the allocation. Thus, we find that Definition \ref{def1} is equivalent to the following Definition \ref{def2} in \citet{predtetchinski2005fuzzy} or \citet{aubin1979}.

\begin{definition}\label{def2}
An allocation $x \in F(\mathcal{E})$ is an element of the fuzzy core $C_f(\mathcal{E})$ of the economy $\mathcal{E}$ if there exist no $\pi \in \Delta_N$ and no $y \in X_N$ such that $\sum_{i \in N} \pi_{i} y_{i} = \sum_{i \in N} \pi_i w_i$ and $u_{i} (y_{i}) > u_{i} (x_i)$ for all $i \in \operatorname{carr}(\pi)$.
\end{definition}

The following simple Example \ref{exam-1} shows a specific economy with discontinuous utility functions (neither upper semi-continuous nor lower semi-continuous). However, it has a nonempty fuzzy core.

\begin{example}\label{exam-1}
Let \( L = \mathbb{R}^2 \), \( N = \{1,2\} \), $X_1 = X_2 = [0,2] \times [0,2] \subset \mathbb{R}^2$, $w_1 = (1,0), \,w_2 = (0,1)$, and consider the pure exchange economy
$
\mathcal{E} = \{(X_i, w_i, u_i)\}_{i \in N}
$
defined with utility functions as follows:
\begin{align*}
u_1(x_1) &=
\begin{cases}
1, & x_1 = (1,0), \\
0, & \text{otherwise},
\end{cases} \\
u_2(x_2) &=
\begin{cases}
1, & x_2 = (0,1), \\
-1, & x_2 = (0,0), \\
0, & \text{otherwise}.
\end{cases}
\end{align*}

Clearly, each \( X_i \) is closed and convex, and \( w_i \in X_i \). The feasible set
\[
F(\mathcal E) = \left\{ (x_1,x_2) \in X_1 \times X_2 : x_1 + x_2 = (1,1) \right\}
\]
is compact. Note that \( u_2 \) is not continuous. In fact, it is neither upper semi-continuous nor lower semi-continuous on $X_2$. Note also that both $u_1$ and $u_2$ are quasi-concave.

Let us compute the fuzzy core using Definition \ref{def2}.
Let
\[
\bar{x} = ((1,0), (0,1)) = (w_1, w_2) \in F(\mathcal E).
\]
At the allocation \( \bar{x} \), we have
\(
u_1(w_1) = 1, \qquad u_2(w_2) = 1.
\)

For any other feasible allocation \( x \in F(\mathcal E) \setminus \{\bar{x}\} \), we have \( x_1 \neq w_1 \) or \( x_2 \neq w_2 \). Note that \( x \in F(\mathcal E)\). Hence
\[
u_1(x_1) \le 0 \quad \text{and} \quad u_2(x_2) \le 0.
\]

Now we verify that \( \bar{x} \) is in the fuzzy core. Suppose, to the contrary, that there exist \( \pi \in \Delta_N \) and \( y \in X \) such that
\[
\sum_{i=1}^2 \pi_i y_i = \sum_{i=1}^2 \pi_i w_i
\]
and
\[
u_i(y_i) > u_i(\bar{x}_i) = 1 \quad \text{for all } i \in \operatorname{carr}(\pi).
\]
But \( u_i \) attains its maximum value \( 1 \) only at \( w_i \) for each \( i \). Thus the inequalities \( u_i(y_i) > 1 \) are impossible. Hence \( \bar{x} \) belongs to the fuzzy core.

Conversely, take any \( x \in F(\mathcal E) \setminus \{\bar{x}\} \). Since \( u_1(x_1) \le 0 \) and \( u_2(x_2) \le 0 \), we can choose \( \pi = (1/2, 1/2) \) and \( y = \bar{x} \). Then
\[
\sum_{i=1}^2 \pi_i y_i = \frac12 w_1 + \frac12 w_2 = \sum_{i=1}^2 \pi_i w_i,
\]
and
\[
u_1(y_1) = 1 > u_1(x_1), \qquad u_2(y_2) = 1 > u_2(x_2).
\]
Thus, \( x \) is not in the fuzzy core. Therefore,
\(
C_f(\mathcal E) = \{\bar{x}\}.
\)
Hence, the economy with these discontinuous utility functions has a nonempty fuzzy core.
\end{example}

\subsection{The core related to payoff-dependent balancedness}

A characteristic function game $V = ( V(S) )_{S \in \mathcal{N}}$ with non-transferable utility is a collection of subsets of $\mathbb{R}^n$, where $V(S) \subset \mathbb{R}^n$ is the (feasible) payoff set of crisp coalition $S$. Let $W = \bigcup_{S \in \mathcal{N}} V(S)$. For a subset $A \subset \mathbb{R}^n$, the symbols $\operatorname{int} A$, $\partial A$, and $\operatorname{co} A$ denote, respectively, the interior, the boundary, and the convex hull of $A$. The following description of the core of a characteristic function game is from \citet{bonnisseau2007payoff}.

\begin{definition}\label{def5}
Let $(V(S))_{S \in \mathcal{N}}$ be a NTU game. The core, $C(V) \subset \mathbb{R}^n$, of the game is given by
\[
C(V) = V(N) \setminus \operatorname{int} W.
\]
\end{definition}

In the current paper, the following assumptions are also considered. For each $S \in \mathcal{N}$,
\begin{enumerate}[label=(B\arabic*), ref=(B\arabic*)]
\item \label{item:B1} $V(S)$ is a nonempty closed subset of $\mathbb{R}^n$.
\item \label{item:B2} $V(S)$ is comprehensive, that is, if $v \in V(S)$, $v' \in \mathbb{R}^n$, and $v'_i \leq v_i$ for all $i \in S$, then $v' \in V(S)$.
\item \label{item:B3} There exists a number $M$ such that, for each $v \in V(S)$, if $v \notin \operatorname{int} V(\{i\})$ for all $i \in S$, then $v_j \leq M$ for all $j \in S$.
\end{enumerate}

\begin{proposition}\label{pro_1}
Let $(V(S))_{S \in \mathcal{N}}$ be a game satisfying Assumptions \ref{item:B1} and \ref{item:B2}. Then $\operatorname{int} W = \bigcup_{S \in \mathcal{N}} \operatorname{int} V(S)$.
\end{proposition}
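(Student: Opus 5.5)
The inclusion $\bigcup_{S \in \mathcal{N}} \operatorname{int} V(S) \subseteq \operatorname{int} W$ is immediate: each $V(S) \subseteq W$, so $\operatorname{int} V(S) \subseteq \operatorname{int} W$. The work is in the reverse inclusion. The plan is to push a point of $\operatorname{int} W$ slightly ``upward'' along the diagonal, and then use comprehensiveness \ref{item:B2} to recover a whole open neighbourhood of the original point inside a single $V(S)$.

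Let $z \in \operatorname{int} W$ and write $e = (1,\ldots,1) \in \mathbb{R}^n$. Choose $\varepsilon > 0$ with the open ball $B(z,\varepsilon) \subseteq W$, and fix $\delta$ with $0 < \delta\sqrt{n} < \varepsilon$. Then $z + \delta e \in B(z,\varepsilon) \subseteq W = \bigcup_{S} V(S)$, so there is some $S \in \mathcal{N}$ with $z + \delta e \in V(S)$.

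Now consider the set
\[
U = \{ v \in \mathbb{R}^n : v_i < z_i + \delta \text{ for all } i \in S \}.
\]
It is open, since it is a finite intersection of open half-spaces, and it contains $z$. For every $v \in U$ we have $v_i \le (z+\delta e)_i$ for all $i \in S$. Assumption \ref{item:B2} places no restriction on the coordinates outside $S$, so it gives $v \in V(S)$. Hence $U \subseteq V(S)$, so $z \in \operatorname{int} V(S)$, which establishes $\operatorname{int} W \subseteq \bigcup_{S} \operatorname{int} V(S)$.

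The only delicate point is using the fact that comprehensiveness in \ref{item:B2} is required only on the coordinates in $S$, with the others free. This is what makes $U$ a genuine neighbourhood of $z$ in $\mathbb{R}^n$ rather than just a neighbourhood in the $S$-coordinates. Once this is noted, the argument is routine, and closedness \ref{item:B1} is not even needed for this step.
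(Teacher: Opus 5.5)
Your proof is correct and follows essentially the same route as the paper: shift the point $z$ up the diagonal to $z+\delta\mathbf{1}\in V(S)$ for some $S$, then use comprehensiveness to fit an open neighbourhood of $z$ inside that single $V(S)$. The paper uses the translated orthant $z+\delta\mathbf{1}-\mathbb{R}^n_+$ where you use the cylinder $U$. That shows the freedom of the coordinates outside $S$ is not really the ``delicate point'' you describe: downward closedness in all coordinates already yields an open neighbourhood of $z$ in $\mathbb{R}^n$.
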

\begin{proof}
Since $V(S) \subseteq W$, it is clear that $\bigcup_{S \in \mathcal{N}} \operatorname{int} V(S) \subseteq \operatorname{int} W$. Take $v \in \operatorname{int} W \subset W$. Then, there exists $\varepsilon > 0$ such that $B(v,\varepsilon) \subset \operatorname{int} W \subset W$, where $B(v,\varepsilon)$ is the ball centered at $v$ with radius $\varepsilon$. Then, there exists $\delta > 0$ such that $v + \delta \mathbf{1} \in B(v,\varepsilon) \subset W = \bigcup_{S \in \mathcal{N}} V(S)$, where $\mathbf{1}$ is the vector in $\mathbb{R}^n$ whose coordinates are all equal to $1$. There is some $S' \in \mathcal{N}$ such that $v + \delta \mathbf{1} \in V(S')$. Since $V(S')$ is comprehensive, we have $v + \delta \mathbf{1} - \mathbb{R}^n_+ \subset V(S')$. Then, there exists $\delta' > 0$ with $\delta' < \delta$ such that $v \in B(v,\delta') \subset \operatorname{int} V(S') \subset \bigcup_{S \in \mathcal{N}} \operatorname{int} V(S)$. The result follows.
\end{proof}

By Proposition \ref{pro_1}, under Assumptions \ref{item:B1} and \ref{item:B2}, the core in Definition \ref{def5} is equivalent to the core in Definition \ref{def6}.

\begin{definition}\label{def6}
Let $(V(S))_{S \in \mathcal{N}}$ be a game. The core, $C(V) \subset \mathbb{R}^n$, of the game is given by
\[
C(V) = V(N) \setminus \bigcup_{S \in \mathcal{N}} \operatorname{int} V(S).
\]
\end{definition}

For any $v \in \mathbb{R}^n$, let $\mathcal{S}(v) = \{S \in \mathcal{N}  :  v \in \partial V(S)\}$. Payoff-dependent balanced games were introduced in \citet{bonnisseau2007payoff}. They were defined by means of the notion of transfer rate rules. Under Assumptions \ref{item:B1} and \ref{item:B2}, we present them as Definitions \ref{def7} and \ref{def8} below.

\begin{definition}\label{def7}
Let $(V(S), S \in \mathcal{N})$ be a game. A transfer rate rule is a collection of set-valued mappings $((\phi_S)_{S \in \mathcal{N}}, \psi)$ such that: for each $S \in \mathcal{N}$, $\phi_S: \partial V(S) \rightrightarrows \Delta_S$ is upper semi-continuous with nonempty compact and convex values; and $\psi: \partial V(N) \rightrightarrows \Delta_N$ is upper semi-continuous with nonempty compact and convex values.
\end{definition}

As stated in \citet{bonnisseau2007payoff}, a game is payoff-dependent balanced if it satisfies a condition with respect to a transfer rate rule $((\phi_S)_{S \in \mathcal{N}}, \psi)$. For convenience, we consider a special case (Definition \ref{def8} below) where $\psi(p_N(v)) \equiv m_N$ for all $v \in \mathbb{R}^n$, with $p_N$ being a continuous mapping from $\mathbb{R}^n$ to $\partial V(N)$. In the remainder of the paper, $\psi$ is taken to be $m_N$.

\begin{definition}\label{def8}
The game $(V(S), S \in \mathcal{N})$ is payoff-dependent balanced with respect to the transfer rate rule $((\phi_S)_{S \in \mathcal{N}}, m_N)$ if, for each $v \in \partial W$,
\[
m_N \in \operatorname{co}\{\phi_S(v)  :  S \in \mathcal{S}(v)\} \Rightarrow v \in V(N).
\]
\end{definition}

\begin{remark}
In \citet{predtetchinski2005fuzzy}, the payoff-dependent balancedness in Definition \ref{def8} is called $(\phi, m_N)$-balancedness, following the original reference \citet{bonnisseau2003payoff}.
\end{remark}

A result on the non-emptiness of the core of a payoff-dependent balanced game was given in \citet{bonnisseau2007payoff} as Lemma \ref{lem_1} below.

\begin{lemma}\label{lem_1}
Let $(V(S), S \in \mathcal{N})$ be a game satisfying Assumptions \ref{item:B1}--\ref{item:B3}. If the game is payoff-dependent balanced with respect to the transfer rate rule $((\phi_S)_{S \in \mathcal{N}}, m_N)$, then there exists a payoff $v$ in the core such that
\[
m_N \in \operatorname{co}\{\phi_S(v)  :  S \in \mathcal{S}(v)\}.
\]
\end{lemma}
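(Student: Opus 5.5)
Lemma~\ref{lem_1} is the theorem of \citet{bonnisseau2007payoff} specialised to $\psi\equiv m_N$. I would prove it by the usual fixed-point argument on a boundary that is homeomorphic to the simplex.

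\textbf{Step 1 (compactification and parametrisation).} Set $\underline v_i=\sup\{v_i: v\in V(\{i\})\}$; finiteness of these numbers is implicit in (B3). Choose $M'>M$ with $M'>\underline v_i$ for all $i$. Let $\lambda>0$ be large, namely larger than $nM'+\sum_i|\underline v_i|+1$. For $\pi\in\Delta_N$ put
\[
a(\pi)=\underline v+\lambda\,(\pi-m_N)\cdot\mathbf 1\text{-shift},
\]
that is, the affine map $\pi\mapsto \underline v-\lambda m_N+\lambda\pi$, and let
\[
t(\pi)=\max\{t\in\mathbb R: a(\pi)+t\mathbf 1\in \partial W\}.
\]
By comprehensiveness (B2), closedness (B1) and the bound (B3), this maximum exists. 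It also depends continuously on $\pi$, because a comprehensive closed union $W$ has a boundary that is a Lipschitz graph over the hyperplane orthogonal to $\mathbf 1$. So $g(\pi)=a(\pi)+t(\pi)\mathbf 1$ maps $\Delta_N$ continuously onto a relevant part of $\partial W$.

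The choice of $\lambda$ is meant to guarantee the following: if $\pi_i=0$, then $g(\pi)_i<\underline v_i$, so $g(\pi)\in\operatorname{int}V(\{i\})$ is impossible on the boundary. It also guarantees that $g(\pi)$ lies in no $\partial V(S)$ with $i\in S$ except through comprehensive slack. Checking these boundary facts from (B3) is routine but fiddly.

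\textbf{Step 2 (correspondence and fixed point).} Define $\Phi(\pi)=\operatorname{co}\{\phi_S(g(\pi)): S\in\mathcal S(g(\pi))\}$. Since each $\phi_S$ is usc with compact convex values, and $\mathcal S(\cdot)$ is closed in the sense that $\partial V(S)$ is closed, $\Phi$ is usc with nonempty compact convex values on $\Delta_N$. Nonemptiness holds because $g(\pi)\in\partial W$ lies in some $\partial V(S)$. Now consider the correspondence
\[
\pi\mapsto \pi+m_N-\Phi(\pi),
\]
projected back to $\Delta_N$, or equivalently apply the Kakutani/Gale--Nikaido-type lemma. This yields $\pi^*$ and $q\in\Phi(\pi^*)$ such that $q-m_N$ lies in the normal cone of $\Delta_N$ at $\pi^*$. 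That is, $q_i-\tfrac1n$ is constant, say $c$, on $\operatorname{carr}(\pi^*)$, and $q_i-\tfrac1n\le c$ off the carrier.

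\textbf{Step 3 (boundary behaviour gives $q=m_N$).} This is the main obstacle. If $\pi^*_i=0$, then by Step 1 agent $i$ has $g(\pi^*)_i$ very low. I want to argue that every $S\in\mathcal S(g(\pi^*))$ either excludes $i$ or yields weight $\phi_S$ giving $q_i$ no contribution. Then $q_i=0$, and summing $\sum_i q_i=1$ forces $c\ge 0$ and hence consistency. Showing $c=0$ exactly requires that $\sum_i(q_i-\tfrac1n)=0$, together with the sign information. I expect to need to modify $g$ slightly, for instance by truncation at $\underline v_i$ via $V(\{i\})$, so that off-carrier coordinates force $q_i\ge 1/n$ or yield a contradiction. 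I would first try the standard trick from Bonnisseau--Iehl\'e of replacing $\phi_S$ by $\phi_S$ plus a penalty pointing toward coordinates where $g_i$ exceeds $M'$.

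\textbf{Step 4 (conclusion).} With $m_N\in\Phi(\pi^*)$, payoff-dependent balancedness (Definition~\ref{def8}) gives $v:=g(\pi^*)\in V(N)$. Since $v\in\partial W$, Proposition~\ref{pro_1} gives $v\notin\bigcup_S\operatorname{int}V(S)$. Hence $v\in C(V)$ by Definition~\ref{def6}, and $v$ satisfies the required inclusion.
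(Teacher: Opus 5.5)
The paper gives no proof of this lemma; it quotes it from \citet{bonnisseau2007payoff}. Your fixed-point outline is the right kind of argument, but it has a genuine gap at exactly the step you flag, and the boundary facts it is meant to rest on are false for your parametrization. Since $\operatorname{int}V(\{i\})\subseteq\operatorname{int}W$, every $g(\pi)\in\partial W$ satisfies $g(\pi)_i\ge\underline v_i$ for all $i$. So the claim ``$\pi_i=0\Rightarrow g(\pi)_i<\underline v_i$'' can never hold. Worse, with $a(\pi)=\underline v-\lambda m_N+\lambda\pi$ the agents with $\pi_i=0$ are the \emph{worst} off relative to $\underline v$. They are therefore exactly the agents who may belong to coalitions in $\mathcal S(g(\pi))$; what (B3) excludes are agents with large $\pi_j$. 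Hence the plan of Step 3, to show $q_i=0$ whenever $\pi^*_i=0$, fails.

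There are further problems. In Step 2 there is a sign slip: a fixed point of $\pi\mapsto P_{\Delta_N}(\pi+m_N-q)$ (metric projection) puts $m_N-q$ in the normal cone at $\pi^*$, not $q-m_N$. The proposed penalty would change $\Phi$, so the fixed point would give $m_N$ in the perturbed set, where payoff-dependent balancedness no longer applies. Finally, (B3) does not imply $\underline v_i<\infty$. It is vacuous if $V(\{i\})=\mathbb R^n$, in which case $W=\mathbb R^n$ and the core is empty, so finiteness must be assumed separately.

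The missing ingredient is a quantitative use of (B3) on $\partial W$. From $g(\pi)_i\ge\underline v_i$ for all $i$ you get $t(\pi)\ge\lambda(\frac1n-\min_k\pi_k)$, hence $g(\pi)_j-\underline v_j\ge\lambda(\pi_j-\min_k\pi_k)$. Then (B3) gives $\pi_j\le\min_k\pi_k+C/\lambda$ for every $j$ lying in some $S\in\mathcal S(g(\pi))$, where $C=\max_j(M-\underline v_j)$.

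Now take the fixed point in the form you actually stated. Apply Kakutani to $(\pi,q)\mapsto\arg\max_{\pi'\in\Delta_N}(q-m_N)\cdot\pi'\times\Phi(\pi)$. This gives $q\in\Phi(\pi^*)$ with $q_i-\frac1n=c$ on $T=\operatorname{carr}(\pi^*)$ and $q_i-\frac1n\le c$ off $T$; summing gives $c\ge 0$. Suppose $c>0$. Then $q_i>0$ for every $i\in T$. Since each $\phi_S$ takes values in $\Delta_S$, every such $i$ lies in a coalition of $\mathcal S(g(\pi^*))$, so $\pi^*_i\le\min_k\pi^*_k+C/\lambda$. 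If $T\ne N$ the minimum is $0$, so $1=\sum_{i\in T}\pi^*_i\le nC/\lambda$, which is impossible for $\lambda>nC$. If $T=N$, summing forces $c=0$. Hence $c=0$, all $q_i\le\frac1n$, and so $q=m_N$; your Step 4 then goes through unchanged.

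Alternatively, flip the parametrization to $\underline v+\lambda(m_N-\pi)+t(\pi)\mathbf 1$. Then $\pi_j=0$ gives $g(\pi)_j-\underline v_j\ge\lambda\max_k\pi_k\ge\lambda/n>C$, so $j$ is in no active coalition and $q_j=0$, which is what your Step 3 was reaching for. With this choice your projection fixed point ($\frac1n-q_i=c$ on $T$, $\le c$ off $T$) works. A nonempty complement of $T$ would force $c\ge\frac1n$ and hence $q=0$, which is impossible. So $T=N$ and $q=m_N$.
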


From Lemma \ref{lem_1}, we see that a payoff-dependent balanced game has at least one core payoff vector satisfying an additional requirement, which provides a method of core selection.

\section{Non-emptiness of (TU) fuzzy cores for economies with discontinuous utility functions}\label{sec-3}
\subsection{Non-emptiness of the fuzzy core}

We introduce a new balanced family of nonempty subsets of $N$ as in Definition \ref{def4} below.

\begin{definition}\label{def4}
Let $K$ be a finite index set. For each $S \in \mathcal{N}$ and each $k \in K$, let $\pi_{S}^k \in \Delta_S$. A family $\beta\subseteq\mathcal{N}$ is called \emph{independent-$\pi$-balanced} if there exist nonnegative numbers $\lambda_S^k \ge 0$ for $S\in\beta$ and $k\in K$ such that
\[
\sum_{S \in \beta,\, k \in K} \lambda_S^k \pi_S^k = m_N,
\qquad
\sum_{S \in \beta,\, k \in K} \lambda_S^k = 1,
\]
and the vectors
\[
\{\pi_S^k \in \mathbb{R}^n : S\in\beta,\ k\in K,\ \lambda_S^k > 0\}
\]
are affinely independent.
\end{definition}
\begin{remark}
$(i)$ The number $\lambda_S^k$ is called the balanced weight corresponding to coalition $S$ and index $k$ in Definition \ref{def4}.

$(ii)$ For a fixed $\beta\subseteq\mathcal N$, the number of weight vectors $\lambda=(\lambda_S^k)_{S\in\beta,k\in K}$ satisfying the independent-$\pi$-balanced condition is finite.

$(iii)$ Given only one fixed $\pi_S \in \Delta_S$ for each $S \in \beta$, we obtain a vector $\pi = (\pi_S)_{S \in \beta}$. For this fixed $\pi$, if the family $\beta$ is independent-$\pi$-balanced according to Definition \ref{def4}, then it is $\pi$-balanced in the sense of \citet{shapley1973balanced}.
\end{remark}

To establish the non-emptiness of fuzzy core, we introduce local fuzzy-coalition stability for a pure exchange economy as follows.

\begin{definition}\label{def9}
A pure exchange economy $\mathcal{E} = \{X_{i}, w_i, u_i\}_{i \in N}$ is locally fuzzy-coalition stable if, for each $x \in F(\mathcal{E})$ that is not in the fuzzy core of $\mathcal{E}$, there exist $\pi_S(x) \in \Delta_S$ with $\pi_{S,i}(x) > 0$ for all $i \in S$, $z_S(x) \in X_S$ with $\sum_{i \in S} \pi_{S,i}(x) z_{S,i}(x) = \sum_{i \in S} \pi_{S,i}(x) w_i$ for each $S \in \mathcal{N}$, and an open neighborhood $O(x)$ of $x$ in $F(\mathcal{E})$, such that for each $y \in O(x) \setminus C_f(\mathcal{E})$, there exists $S \in \mathcal{N}$ for which $u_i(z_{S,i}(x)) > u_i(y_i)$ for all $i \in S$.
\end{definition}

Example \ref{exam-1} shows that there exists an economy with discontinuous utility functions (neither upper semi-continuous nor lower semi-continuous) that nevertheless has a nonempty fuzzy core. We now explain why the fuzzy core of the economy in Example \ref{exam-1} is nonempty by showing in Example \ref{exam-2} below that the economy is locally fuzzy-coalition stable.

\begin{example}\label{exam-2}
Let $\mathcal{E}$ be the economy in Example \ref{exam-1}. Take any $x \in F(\mathcal E) \setminus C_f(\mathcal E)$. Since $C_f(\mathcal E) = \{\bar{x}\}$ is a singleton, there exists an open neighborhood $O(x)$ of $x$ in $F(\mathcal E)$ such that $\bar{x} \notin O(x)$.

Fix the crisp grand coalition $N = \{1,2\}$, choose
\[
\pi_N(x) = \left( \frac12, \frac12 \right) \in \Delta_N,
\]
and choose
\[
z_N(x) = \bar{x} = ((1,0), (0,1)) \in X.
\]
Note that the constraint requirement holds because
\[
\sum_{i \in N} \pi_{N,i}(x) z_{N,i}(x)
= \frac12 (1,0) + \frac12 (0,1)
= \left( \frac12, \frac12 \right)
= \frac12 w_1 + \frac12 w_2
= \sum_{i \in N} \pi_{N,i}(x) w_i.
\]

Now, for any $y \in O(x)$, since $y \neq \bar{x}$, we have $y_1 \neq (1,0)$ or $y_2 \neq (0,1)$. Therefore,
\[
u_1(y_1) \le 0 \quad \text{and} \quad u_2(y_2) \le 0.
\]
But
\[
u_1(z_{1}) = u_1(1,0) = 1, \qquad u_2(z_{2}) = u_2(0,1) = 1.
\]
Hence
\[
u_1(z_{1}) = 1 > u_1(y_1), \qquad u_2(z_{2}) = 1 > u_2(y_2).
\]
Thus, for every $y \in O(x)$, the coalition $N = \{1,2\}$ strictly improves the utilities of both agents.

Therefore, by Definition \ref{def9}, the economy $\mathcal{E}$ is locally fuzzy-coalition stable. This will explain why, despite the discontinuities of the utility functions, the economy has a nonempty fuzzy core, as shown in Theorem \ref{thm3-1} below.
\end{example}

Local fuzzy-coalition stability is guaranteed if all utility functions in the economy are upper semi-continuous. Therefore, the economy with locally fuzzy-coalition-stable utility functions includes a large class of economies.

\begin{proposition}\label{pro_2}
For an economy $\mathcal{E} = \{X_{i}, w_i, u_i\}_{i \in N}$, if the utility function $u_i$ is upper semi-continuous for each $i \in N$, then $\mathcal{E}$ is locally fuzzy-coalition stable.
\end{proposition}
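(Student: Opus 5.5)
The proof goes straight through Definition \ref{def9}, using the blocking coalition that witnesses that $x$ is not in the fuzzy core. Fix $x \in F(\mathcal{E}) \setminus C_f(\mathcal{E})$. By Definition \ref{def1}, some $S^* \in \mathcal{N}$ blocks $x$. So there are $\pi^* \in \Delta_{S^*}$ with $\pi^*_i > 0$ for all $i \in S^*$, and $y^* \in X_{S^*}$ with $\sum_{i\in S^*}\pi^*_i y^*_i = \sum_{i\in S^*}\pi^*_i w_i$, such that $u_i(y^*_i) > u_i(x_i)$ for all $i \in S^*$.

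We set $\pi_{S^*}(x) = \pi^*$ and $z_{S^*}(x) = y^*$. Definition \ref{def9} also asks for data for every other $S \in \mathcal{N}$. For these we take the trivial choice: any $\pi_S(x) \in \Delta_S$ with all coordinates in $S$ positive, for instance uniform weights on $S$, together with $z_S(x) = (w_i)_{i\in S} \in X_S$. This is admissible by \ref{items:A1}, and the budget identity holds trivially. These auxiliary coalitions play no role in the stability conclusion.

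Next we build the neighborhood. For each $i \in S^*$ we have $u_i(x_i) < u_i(y^*_i)$. Upper semicontinuity of $u_i$ makes $U_i = \{x'_i \in X_i : u_i(x'_i) < u_i(y^*_i)\}$ open in $X_i$, and it contains $x_i$. Define
\[
O(x) = \{ y \in F(\mathcal{E}) : y_i \in U_i \text{ for all } i \in S^* \}.
\]
This set is the intersection of $F(\mathcal{E})$ with finitely many preimages of open sets under the continuous coordinate projections, so it is open in $F(\mathcal{E})$, and it contains $x$.

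For any $y \in O(x)$, and in particular any $y \in O(x)\setminus C_f(\mathcal{E})$, we get $u_i(z_{S^*,i}(x)) = u_i(y^*_i) > u_i(y_i)$ for all $i \in S^*$. So $S = S^*$ is the coalition Definition \ref{def9} requires, and $\mathcal{E}$ is locally fuzzy-coalition stable.

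No step here is really an obstacle. The one point to handle carefully is that Definition \ref{def9} asks for data $(\pi_S(x), z_S(x))$ for every $S$, which is why the trivial endowment choice for $S \neq S^*$ is needed. Openness in the relative topology of $F(\mathcal{E})$ is immediate because the product topology's projections are continuous.
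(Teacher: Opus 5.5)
Your proof is correct and is essentially the paper's argument: take a blocking coalition $S^*$ with its threat, use upper semicontinuity of each $u_i$, $i\in S^*$, to get a relatively open neighbourhood on which $u_i(y^*_i)>u_i(y_i)$, and intersect these finitely many sets. Your one addition is that you explicitly supply the data $(\pi_S(x),z_S(x))$ for the other coalitions $S\neq S^*$. Definition \ref{def9} literally requires this data, and the paper's proof leaves it implicit; your endowment-based choice relies on $w_i\in X_i$, i.e.\ on assumption (A1).
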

\begin{proof}
If $x \in F(\mathcal{E})$ and $x$ is not in the fuzzy core of $\mathcal{E}$, then there exist $S \in \mathcal{N}$, $\pi \in \Delta_S$ with $\pi_i > 0$ for all $i \in S$, and $z \in X_S$ with $\sum_{i \in S} \pi_i z_i = \sum_{i \in S} \pi_i w_i$, such that $u_i(z_i) > u_i(x_i)$ for all $i \in S$. Since $u_i$ is upper semi-continuous for each $i \in S$, there exists an open neighborhood $U^i(x)$ in $F(\mathcal{E})$ such that $u_i(z_i) > u_i(y_i)$ for all $y \in U^i(x)$. Let $O(x) = \cap_{i \in S} U^i(x)$. Then, $u_i(z_i) > u_i(y_i)$ for all $y \in O(x)$.
\end{proof}

To establish the non-emptiness of the fuzzy core of an exchange economy, we need the following lemma.

\begin{lemma}\label{lem3-1}
Let $T$ be a finite index set. Given $S \in \mathcal{N}$, let $b^k \in \mathbb{R}^{|S|}$ and $\pi^k \in \Delta_S$ for each $k \in T$. Define
\[
A = \bigcup_{k \in T} \{ a \in \mathbb{R}^n : b_i^k \ge a_i,\ \forall i \in S \},
\]
and $\phi: \partial A \rightrightarrows \Delta_S$ by
\[
\phi(a) = \operatorname{co}\{ \pi^k \in \Delta_S  :  b_i^k \ge a_i,\ \forall i \in S \}.
\]
Then, the set-valued mapping $\phi$ is upper semi-continuous on $\partial A$.
\end{lemma}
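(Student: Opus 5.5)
The plan is to prove upper semi-continuity through the standard neighborhood characterization: for each $a \in \partial A$ and each open set $G \supseteq \phi(a)$, there should be a neighborhood $U$ of $a$ in $\partial A$ with $\phi(a') \subseteq G$ for all $a' \in U$. The mapping $\phi$ is determined by the \emph{active index set}
\[
T(a) = \{k \in T : b_i^k \ge a_i \ \forall i \in S\},
\]
since $\phi(a) = \operatorname{co}\{\pi^k : k \in T(a)\}$. The values are nonempty because $a \in \partial A \subseteq A$; indeed $A$ is a finite union of closed sets, hence closed. The values are also compact and convex, being convex hulls of finitely many points. So the whole argument reduces to understanding how $T(a')$ varies as $a'$ moves near $a$.

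The key step is to show that $T(\cdot)$ is upper semi-continuous in the combinatorial sense: there is $\varepsilon > 0$ such that $T(a') \subseteq T(a)$ whenever $\|a' - a\| < \varepsilon$. Take any $k \notin T(a)$. Then some $i_k \in S$ has $b_{i_k}^k < a_{i_k}$, and we set $\varepsilon_k = a_{i_k} - b_{i_k}^k > 0$. If $|a'_{i_k} - a_{i_k}| < \varepsilon_k$, then $a'_{i_k} > b_{i_k}^k$, so $k \notin T(a')$. Since $T$ is finite, let $\varepsilon = \min_{k \notin T(a)} \varepsilon_k$, or $\varepsilon = +\infty$ if $T(a) = T$. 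For $a'$ in the ball $B(a,\varepsilon)$ intersected with $\partial A$, every index inactive at $a$ stays inactive at $a'$, so $T(a') \subseteq T(a)$.

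From this, $\phi(a') = \operatorname{co}\{\pi^k : k \in T(a')\} \subseteq \operatorname{co}\{\pi^k : k \in T(a)\} = \phi(a) \subseteq G$ on that neighborhood, which is exactly upper semi-continuity. In fact $\phi$ is locally shrinking, so no $G$-dependent estimate is needed. Equivalently, one could verify closedness of the graph together with compactness of the range $\Delta_S$; but the direct inclusion argument is shorter, so I will use that.

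The only delicate point is the bookkeeping, not any analytic estimate. The inequalities defining $T(a)$ are non-strict, and this is what makes the complement condition open. One must also make sure that only the coordinates $i \in S$ matter: coordinates outside $S$ impose no constraint, so perturbing them does not affect $T$. Finally, the domain is $\partial A$ with the relative topology, which causes no difficulty because the neighborhood is simply intersected with $\partial A$.
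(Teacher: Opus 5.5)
Your proof is correct, but it takes a different route from the paper.

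The paper uses the closed-graph criterion. It takes $a^m \to a^0$ in $\partial A$ and $y^m \in \phi(a^m)$ with $y^m \to y^0$, and writes each $y^m$ as a convex combination of the $\pi^k$ active at $a^m$. It then extracts a convergent subsequence of the weight vectors in the compact simplex $\Delta_T$. Passing to the limit in $b_i^k \ge a_i^m$ shows that every index with positive limiting weight is active at $a^0$. Compactness of $\Delta_S$ then turns the closed graph into upper semi-continuity.

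You instead show that the active set can only shrink near $a$. Inactivity of an index is witnessed by a single strict inequality $b_{i_k}^k < a_{i_k}$, which is an open condition, so $\phi(a') \subseteq \phi(a)$ on a whole neighborhood of $a$. Both arguments rest on the same fact: each activity condition defines a closed set.

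What your route buys:
\begin{itemize}
\item It is shorter and needs neither subsequence extraction nor compactness of the range.
\item It gives the stronger conclusion that $\phi$ is locally shrinking.
\item It justifies nonempty values explicitly ($A$ is closed, so $\partial A \subseteq A$), which the paper only asserts.
\item Your inclusion argument carries over verbatim to Lemma \ref{lem3-2}, where inactivity means $\sum_{i\in S} p_i^k a_i \neq c^k$, again an open condition.
\end{itemize}

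What the paper's sequential template buys is portability. With Carath\'eodory's theorem in place of the finite weight simplex, it would extend to compact, continuously parametrized index families. Your choice $\varepsilon = \min_{k \notin T(a)} \varepsilon_k$, by contrast, depends essentially on $T$ being finite.
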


\begin{proof}
Clearly, $\phi$ has nonempty values. Since $\Delta_S$ is compact, it suffices to prove that the graph of $\phi$ is closed.

Let $\{a^m\} \subset \partial A$ be a sequence such that $a^m \to a^0 \in \partial A$. Take any $y^m \in \phi(a^m)$ and suppose $y^m \to y^0 \in \mathbb{R}^n$. It needs to show $y^0 \in \phi(a^0)$.

For each $m$, by the definition of $\phi$, the point $y^m$ lies in the convex hull of the finite set $\{\pi^k : k \in T\}$. Hence, there exist coefficients $\lambda^m_{k} \ge 0$ for $k \in T$ such that
\[
\sum_{k \in T} \lambda^m_{k} = 1,
\quad
y^m = \sum_{k \in T} \lambda^m_{k} \pi^k,
\]
and the following support condition holds:
\begin{equation}\label{eq-16}
\lambda^m_{k} > 0 \Rightarrow b_i^k \ge a^m_i,\ \forall i \in S.
\end{equation}

Define the simplex
\(
\Delta_T = \left\{ \lambda \in \mathbb{R}^{|T|}_+  :  \sum_{k \in T} \lambda_k = 1 \right\}.
\)
Since $T$ is finite, $\Delta_T$ is compact in $\mathbb{R}^{|T|}$. Therefore the sequence $\{\lambda^m\} = \{(\lambda^m_{k})_{k \in T}\}$ admits a convergent subsequence. Relabeling this subsequence as $\{\lambda^m\}$, we may assume
\(
\lambda^m \to \lambda^0 = (\lambda^0_{k})_{k \in T} \in \Delta_T.
\)

Since $y^m \to y^0$, we obtain
\begin{equation}\label{eq-17}
y^0 = \lim_{m \to \infty} y^m
= \lim_{m \to \infty} \sum_{k \in T} \lambda^m_{k} \pi^k
= \sum_{k \in T} \lambda^0_{k} \pi^k.
\end{equation}

We now claim that for every $k \in T$ with $\lambda^0_{k} > 0$, it holds that
\begin{equation}\label{eq-18}
b_i^k \ge a^0_i,\quad \forall i \in S.
\end{equation}

Indeed, fix $k \in T$. Since $\lambda^m_{k} \to \lambda^0_{k} > 0$, there exists an integer $M$ such that $\lambda^m_{k} > 0$ for all $m \ge M$. By the support condition \eqref{eq-16}, for all $m \ge M$ and all $i \in S$, we have $b_i^k \ge a^m_i$. As $m \to \infty$, $a^m_i \to a^0_i$, hence $b_i^k \ge a^0_i$ for all $i \in S$. This proves \eqref{eq-18}.

Define the index set
\[
I(a^0) := \{ k \in T : b_i^k \ge a^0_i,\ \forall i \in S \}.
\]
By \eqref{eq-18}, every $k \in T$ with $\lambda^0_{k} > 0$ belongs to $I(a^0)$. Consequently, from \eqref{eq-17}, we have
\[
y^0 = \sum_{k \in T} \lambda^0_{k} \pi^k
= \sum_{k \in I(a^0)} \lambda^0_{k} \pi^k + \sum_{k \notin I(a^0)} \lambda^0_{k} \pi^k.
\]
The second term on the right-hand side vanishes since all its coefficients are zero. Thus
\(
y^0 = \sum_{k \in I(a^0)} \lambda^ 0_{k} \pi^k.
\)
Moreover, $\sum_{k \in I(a^0)} \lambda^0_{k} = 1$ because $\sum_{k \in T} \lambda^0_{k} = 1$. Therefore
\[
y^0 \in \operatorname{co}\{\pi^k : k \in I(a^0)\} = \phi(a^0).
\]
This completes the proof.
\end{proof}

By constructing a special transfer rate rule for an NTU game, the first main result (Theorem \ref{thm3-1}) proves the non-emptiness of the fuzzy core for a locally fuzzy-coalition-stable economy with quasi-concave utility functions.

\begin{theorem}\label{thm3-1}
Let $\mathcal{E} = \{X_{i}, w_i, u_i\}_{i \in N}$ be a pure exchange economy satisfying local fuzzy-coalition stability and Assumptions \ref{items:A1} and \ref{items:A2}. For each agent $i \in N$, assume that the utility function $u_i$ is quasi-concave. Then, the fuzzy core of the economy is nonempty.
\end{theorem}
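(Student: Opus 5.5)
We argue by contradiction: assume $C_f(\mathcal{E})=\emptyset$. Local fuzzy-coalition stability (Definition \ref{def9}) then supplies a finite blocking structure, and we feed it into an NTU game to which Lemma \ref{lem_1} applies. The core payoff produced by Lemma \ref{lem_1} will turn out to be unblockable, which contradicts the assumption.

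\emph{Step 1 (finite cover).} Since $C_f(\mathcal{E})=\emptyset$, every $x\in F(\mathcal{E})$ has data $\pi_S(x)\in\Delta_S$ and $z_S(x)\in X_S$ for each $S\in\mathcal{N}$, together with a neighborhood $O(x)$. For every $y\in O(x)$ some $S$ satisfies $u_i(z_{S,i}(x))>u_i(y_i)$ for all $i\in S$. By Assumption \ref{items:A2}, $F(\mathcal{E})$ is compact, so finitely many neighborhoods $O(x^1),\dots,O(x^K)$ cover it. Put $K=\{1,\dots,K\}$, $\pi_S^k=\pi_S(x^k)$, $z_S^k=z_S(x^k)$ and $b_{S,i}^k=u_i(z^k_{S,i})$. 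Every feasible $y$ is therefore blocked through some pair $(S,k)$.

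\emph{Step 2 (the game).} For each $S$ define $V(S)=\bigcup_{k}\{a\in\mathbb{R}^n: a_i\le b^k_{S,i},\ i\in S\}$. This is a finite union of closed comprehensive sets, so \ref{item:B1} and \ref{item:B2} hold. The transfer rate rule is $\phi_S(a)=\operatorname{co}\{\pi^k_S: a_i\le b^k_{S,i}\ \forall i\in S\}$; it is upper semicontinuous by Lemma \ref{lem3-1}. Because there are only finitely many values $b$, property \ref{item:B3} holds trivially, taking $M=\max b$. The grand coalition's set is enlarged so that balancedness holds by construction: let $V(N)$ be the union of the sets above for $N$ together with the comprehensive hull of all points $a$ for which some independent-$\pi$-balanced family $\beta$ with weights $\lambda_S^k$ satisfies $a_i\le b^k_{S,i}$ for all $i\in S$, $S\in\beta$ and $\lambda^k_S>0$. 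By Definition \ref{def4} there are only finitely many such weight systems, so $V(N)$ is still a finite union of closed lower orthants. Carathéodory's theorem lets us reduce any representation $m_N\in\operatorname{co}\{\phi_S(v)\}$ to one with affinely independent $\pi_S^k$. Hence Definition \ref{def8} holds, and Lemma \ref{lem_1} yields a core point $v$ with its balancing weights.

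\emph{Step 3 (from balanced weights to a feasible allocation).} Take the balanced weights $\lambda^k_S$ at $v$. For each $i$ set $\mu_i=\sum_{S\ni i,k}\lambda^k_S\pi^k_{S,i}=1/n$ and $\bar y_i=\sum_{S\ni i,k}\lambda^k_S\pi^k_{S,i}z^k_{S,i}/\mu_i$. This is a convex combination of points of $X_i$, so $\bar y_i\in X_i$. Summing the budget identities $\sum_{i\in S}\pi^k_{S,i}z^k_{S,i}=\sum_{i\in S}\pi^k_{S,i}w_i$ with the weights $\lambda_S^k$ gives $\frac1n\sum_i\bar y_i=\frac1n\sum_i w_i$, so $\bar y\in F(\mathcal{E})$. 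Quasi-concavity then gives $u_i(\bar y_i)\ge\min\{b^k_{S,i}: \lambda^k_S>0,\ S\ni i\}\ge v_i$, because each active orthant contains $v$.

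\emph{Step 4 (contradiction).} By Step 1, $\bar y$ lies in some $O(x^k)$, so some $S$ satisfies $b^k_{S,i}>u_i(\bar y_i)\ge v_i$ for all $i\in S$. Then $v\in\operatorname{int}V(S)$, which contradicts $v\in C(V)$.

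The main obstacle is Step 2. The enlarged $V(N)$ has to stay closed and comprehensive and has to make Definition \ref{def8} hold. At the same time, membership $v\in V(N)$ in Step 4 must still come from a genuine balanced family. If $v$ lies in $V(N)$ only through the $N$-orthants themselves, we take $\beta=\{N\}$ with a single active $\pi_N^k$. This choice is not balanced unless $\pi^k_N=m_N$, so this case needs care. I would handle it by including in $V(N)$ only the orthants that come from independent-$\pi$-balanced families, where $\{N\}$ qualifies only with balancing weights.
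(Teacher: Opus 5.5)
Your overall architecture is the paper's: contradiction, finite subcover, an NTU game built from the threats, Lemma \ref{lem_1}, a feasible allocation assembled from balanced weights via quasi-concavity, and local stability for the contradiction. You differ in building the balanced part of $V(N)$ from the minima $c^{\beta,\lambda}_i=\min\{b^k_{S,i}: i\in S,\ \lambda^k_S>0\}$. This makes balancedness automatic and moves quasi-concavity to Step 3, whereas the paper uses orthants at $u(x^\beta)$ and spends quasi-concavity on verifying balancedness.

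You also keep the orthants $\{a: a\le b^k_N\}$ in $V(N)$, and that choice is better than the paper's. Step 4 needs these orthants when the coalition blocking $\bar y$ is $N$ itself. The paper's $V(N)$ in \eqref{eq-2} omits them, so its closing claim $a'\in V(S')$ is unsupported for $S'=N$. The remedy in your last paragraph (keep only orthants coming from balanced families) would reopen exactly that hole. The worry behind it is also misplaced: in Step 3, $\pi^k_N$ enters the balancing like any other $\pi^k_S$, and nothing requires $\{N\}$ to be balanced on its own.

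The genuine gap is the transfer rate rule on $\partial V(N)$. With your enlarged $V(N)$, $\phi_N(a)=\operatorname{co}\{\pi^k_N: a\le b^k_N\}$ is empty at every point of $\partial V(N)$ lying in no $N$-orthant. Such points exist as soon as some balanced orthant is not covered by the $N$-orthants. So Lemma \ref{lem3-1} does not cover $\phi_N$, and Lemma \ref{lem_1} cannot be invoked.

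Nor can you patch this with $\phi_N\equiv\{m_N\}$ or $\Delta_N$, although that patch is harmless for the paper, which never uses the balancing conclusion. Every core point satisfies $v\in V(N)\setminus\operatorname{int}V(N)\subseteq\partial V(N)$, so $N\in\mathcal{S}(v)$ always. Under such a patch the conclusion $m_N\in\operatorname{co}\{\phi_S(v): S\in\mathcal{S}(v)\}$ is vacuous, and Step 3 has no weights to work with.

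One repair is to set
\[
\phi_N(a)=\operatorname{co}\bigl(\{\pi^k_N: a\le b^k_N\}\cup\{m_N: a\le c^{\beta,\lambda}\ \text{for some balanced}\ (\beta,\lambda)\}\bigr).
\]
This has exactly the form in Lemma \ref{lem3-1} with $A=V(N)$, so it is upper semi-continuous with nonempty values. At the core point $v$ there are two cases:
\begin{enumerate}
\item If $v\le c^{\beta,\lambda}$ for some balanced $(\beta,\lambda)$, run Step 3 with those weights.
\item Otherwise $\phi_N(v)$ contains only genuine $\pi^k_N$ with $b^k_N\ge v$. The balancing supplied by Lemma \ref{lem_1} is then genuine, and Step 3 runs as you wrote it. (Carath\'eodory then places $v$ in a balanced orthant, so this case does not actually occur.)
\end{enumerate}
Either way you obtain a feasible $\bar y$ with $u_i(\bar y_i)\ge v_i$ for all $i$. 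Step 4 then works for every $S$, including $N$.
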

\begin{proof}
Suppose $C_f(\mathcal{E}) = \emptyset$. Then, for each $x \in F(\mathcal{E})$ and each $S \in \mathcal{N}$, there exist $\pi_S(x) \in \Delta_S$ with $\pi_{S,i} > 0$ for all $i \in S$, $z_S(x) \in X_S$, and an open neighborhood $O(x)$ of $x$ in $F(\mathcal{E})$ satisfying Definition \ref{def9}. Since $F(\mathcal{E})$ is compact, there exists a finite set $\{x^k  :  k \in K\} \subset X$ such that the union $\bigcup_{k \in K} O(x^k)$ is a subcover of $\bigcup_{x \in F(\mathcal{E})} O(x)$. For convenience, we denote $z_S(x^k)$ by $z_S^k$ and $\pi_S(x^k)$ by $\pi_S^k$ for all $k \in K$ and $S \in \mathcal{N}$. For these $\pi_S^k$ for all $S \in \mathcal{N}$ and $k \in K$, let $\mathcal{B}$ be the collection of all independent-$\pi$-balanced families (in Definition \ref{def4}) of $\mathcal{N}$.

For any $S \in \mathcal{N}$ with $S \neq N$, we define
\begin{equation}\label{eq-1}
V(S) = \bigcup_{k \in K} \{ a \in \mathbb{R}^n : u_i(z_{S,i}^k) \ge a_i,\ \forall i \in S \},
\end{equation}
and
\begin{equation}\label{eq-2}
V(N) = \bigcup_{\beta \in \mathcal{B}} \{ a \in \mathbb{R}^n : u_i(x^\beta_i) \ge a_i,\ \forall i \in N \},
\end{equation}
where, for any given $\beta$, we require that
\begin{equation}\label{eq-3}
x_i^\beta = \sum_{S \in \beta(i), k \in K} n \lambda_S^k(\beta) \pi^{k}_{S,i} z_{S,i}^{k}, \quad \forall i \in N,
\end{equation}
and $\beta(i) = \{ S \in \beta  :  i \in S \}$.
Note that in Eq.~\eqref{eq-3}, $\lambda_S^k(\beta) \ge 0$ ranges over all finite balanced weights with respect to the independent-$\pi$-balanced family $\beta$. Since the family $\beta$ is independent-$\pi$-balanced, by Definition \ref{def4}, we have
\begin{equation}\label{eq-13}
\sum_{S \in \beta(i), k \in K} n \lambda_S^k(\beta) \pi^{k}_{S,i} = 1, \quad \forall i \in N.
\end{equation}
Note that $z_{S,i}^{k} \in X_i$ for all $i \in N$ and $k \in K$. Then, $x_i^\beta \in X_i$ for all $i \in N$.

Thus, from Eqs.~\eqref{eq-1}--\eqref{eq-3}, a characteristic function game $V$ is constructed.

For each $S \in \mathcal{N}$ with $S \neq N$, by its construction, the set $V(S)$ is nonempty, closed in $\mathbb{R}^n$, and comprehensive. Since $|\mathcal{B}|$ is finite and the set $\{u_i(x^\beta_i)  :  \beta \in \mathcal{B}\}$ is also finite by the finiteness of balanced weights for each $\beta\in \mathcal{B}$, $V(N)$ is nonempty, closed, and comprehensive. That is, the game $V$ satisfies Assumptions \ref{item:B1} and \ref{item:B2}. From the finiteness of elements in $K$ and $\mathcal{B}$, $V(S)$ is bounded above for each $S \in \mathcal{N}$; that is, there exists a number $M$ such that if $v \in V(S)$, then $v_i \leq M$ for all $i \in S$. Hence, the game $V$ satisfies Assumption \ref{item:B3}.

Next, we prove that the game $V$ satisfies payoff-dependent balancedness in Definition \ref{def8}. For each $S \in \mathcal{N}$, define $\phi_S : \partial V(S) \rightrightarrows \Delta_S$ by
\begin{equation}\label{eq-15}
\phi_S(a) = \operatorname{co}\{ \pi_S^{k} \in \Delta_S  :   k \in K \text{ such that } u_i(z_{S,i}^k) \ge a_i,\ \forall i \in S \}
\end{equation}
for each $a \in \partial V(S)$. By Lemma \ref{lem3-1}, $\phi_S$ is upper semi-continuous with nonempty compact and convex values. For each $a \in \partial W$, if $m_N \in \operatorname{co}\{\phi_S(a)  :  S \in \mathcal{S}(a)\}$ with $\mathcal S(a)=\{S\in\mathcal N :  a\in\partial V(S)\}$, then, we need to show that $a \in V(N)$.

If $N \in \mathcal{S}(a)$, then it is clear that $a \in V(N)$. We assume that $N \notin \mathcal{S}(a)$. Since $m_N \in \operatorname{co}\{\phi_S(a)  :  S \in \mathcal{S}(a)\}$, by Eq.~\eqref{eq-15}, there exist $\lambda_S^k \ge 0$ for each $S \in \mathcal{S}(a)$ and each $k \in K(S)$ such that $\sum_{S \in \mathcal{S}(a), k \in K(S)} \lambda_S^k \pi^k_S = m_N$, where $K(S) \subset K$ with
\[
K(S) = \{ k \in K  :  u_i(z_{S,i}^k) \ge a_i,\ \forall i \in S \}.
\]
Of course, letting $\lambda_S^k = 0$ for each $S \in \mathcal{S}(a)$ and each $k \in K \setminus K(S)$, we still have $\sum_{S \in \mathcal{S}(a), k \in K} \lambda_S^k \pi^k_S = m_N$.

By Carath\'eodory's theorem, we can extract an affinely independent subcollection of $\{\pi_S^k : S\in\mathcal{S}(a),\, k\in K(S)\}$ that still represents $m_N$ as a convex combination. For all remaining $S\in\mathcal{S}(a)$ and $k\in K$ that are not used in this subcollection, set the corresponding coefficients $\lambda_S^k$ equal to zero. In this way, we obtain nonnegative weight coefficients (denoted still as $\lambda_S^k$ for convenience) defined for every $S\in\mathcal{S}(a)$ and $k\in K$ such that
\(
\sum_{S\in\mathcal{S}(a),\, k\in K} \lambda_S^k \pi_S^k = m_N,
\)
and the vectors $\{\pi_S^k : \lambda_S^k>0\}$ are affinely independent. Then, $\mathcal{S}(a)$ is an independent-$\pi$-balanced family, and we denote it by $\beta'$.
Hence, $\mathcal{S}(a) \in \mathcal{B}$. Denote $\mathcal{S}(a)$ by $\beta'$. We know that $\lambda_S^k$ is a balanced weight (denoted by $\lambda_S^k(\beta')$ to clearly show its relation to $\beta'$) for this independent-$\pi$-balanced family $\beta'$. Given $S \in \beta' \in \mathcal{B}$, it is true that $a \in \partial V(S)$; hence, for each $k \in K(S)$, we have $u_i(z_{S,i}^k) \ge a_i$ for all $i \in S$. Then we have
\begin{equation}\label{eq-5}
u_i(z_{S,i}^{k}) \ge a_i, \quad \forall i \in S \in \beta',\ k \in K(S).
\end{equation}

From the quasi-concavity of $u_i$ and Eqs.~\eqref{eq-13} and \eqref{eq-5}, we have
\begin{equation}\label{eq-4}
\begin{aligned}
u_i(x_i^{\beta'})
&= u_i\biggl( \sum_{S \in \beta'(i), k \in K} n \lambda_S^k(\beta') \pi^{k}_{S,i} z_{S,i}^{k} \biggr) \\
&= u_i\biggl( \sum_{S \in \beta'(i), k \in K(S)} n \lambda_S^k(\beta') \pi^{k}_{S,i} z_{S,i}^{k} \biggr) \\
&\ge \min\{ u_i(z_{S,i}^{k})  :  S \in \beta'(i),\ k \in K(S) \} \ge a_i, \quad \forall i \in N.
\end{aligned}
\end{equation}
Then, $a \in V(N)$ by Eqs.~\eqref{eq-2}, \eqref{eq-3}, and \eqref{eq-4}. Hence, the game $V$ is payoff-dependent balanced with respect to $((\phi_S)_{S \in \mathcal{N}}, m_N)$.

From the above, the game $V$ satisfies all the conditions in Lemma \ref{lem_1}. Then, there exists a core payoff vector $a^* \in V(N) \setminus \bigcup_{S \in \mathcal{N}} \operatorname{int} V(S)$. Of course, $a^* \in V(N)$, so there exists $\beta \in \mathcal{B}$ such that
\begin{equation}\label{eq-6}
u_i(x^\beta_i) \ge a^*_i, \quad \forall i \in N.
\end{equation}
Note that
\begin{equation}\label{eq-7}
\begin{aligned}
\sum_{i \in N} (x^\beta_i - w_i)
&= \sum_{i \in N} \biggl( \sum_{S \in \beta(i), k \in K} n \lambda_S^{k}(\beta) \pi^{k}_{S,i} z_{S,i}^{k} - w_i \biggr) \\
&= \sum_{i \in N} \sum_{S \in \beta(i), k \in K} n \lambda_S^{k}(\beta) \pi^{k}_{S,i} (z_{S,i}^{k} - w_i) \\
&= \sum_{i \in N} \sum_{S \in \beta(i)} \sum_{k \in K} n \lambda_S^{k}(\beta) \pi^{k}_{S,i} (z_{S,i}^{k} - w_i) \\
&= \sum_{S \in \beta} \sum_{i \in S} \sum_{k \in K} n \lambda_S^{k}(\beta) \pi^{k}_{S,i} (z_{S,i}^{k} - w_i) \\
&= \sum_{S \in \beta} \sum_{k \in K} n \lambda_S^{k}(\beta) \sum_{i \in S} \pi^{k}_{S,i} (z_{S,i}^{k} - w_i),
\end{aligned}
\end{equation}
where the second equality follows from Eq.~\eqref{eq-13}. Since $\pi_S^{k}$ and $z_S^{k}$ for all $S \in \mathcal{N}$ and $k \in K$ satisfy Definition \ref{def9}, we have $\sum_{i \in S} \pi^{k}_{S,i} z_{S,i}^{k} = \sum_{i \in S} \pi^{k}_{S,i} w_i$. From Eq.~\eqref{eq-7}, we get $\sum_{i \in N} (x^\beta_i - w_i) = 0$; hence $x^\beta \in F(\mathcal{E})$.

By the implication of the core of $V$, for any $S \in \mathcal{N}$, there is no $a' \in V(S)$ such that $a^*_i < a'_i$ for all $i \in S$. From the initial finite open cover, there exists $k \in K$ such that $x^\beta \in O(x^k) \subset F(\mathcal{E})$. If $x^\beta \in C_f(\mathcal{E})$, the result holds trivially. So we assume $x^\beta \in O(x^k) \setminus C_f(\mathcal{E})$. Since the economy is locally fuzzy-coalition stable, there exists $S' \in \mathcal{N}$ such that
\begin{equation}\label{eq-8}
u_i(z_{S',i}^k) > u_i(x^\beta_i), \quad \forall i \in S'.
\end{equation}
By Eqs.~\eqref{eq-6} and \eqref{eq-8}, it holds that $u_i(z_{S',i}^k) > a^*_i$ for all $i \in S'$. Let $a' = (u_i(z_{S',i}^k), a^*_{-S'})$. Note that $a' \in V(S')$ and $a'_i > a^*_i$ for all $i \in S'$, which contradicts the fact that $a^*$ is in the core of the game $V$. The proof is complete.
\end{proof}

From Proposition \ref{pro_2} and Theorem \ref{thm3-1}, we obtain the following result.

\begin{corollary}\label{cor3-2}
Let $\mathcal{E} = \{X_{i}, w_i, u_i\}_{i \in N}$ be a pure exchange economy satisfying Assumptions \ref{items:A1} and \ref{items:A2}. If all utility functions are upper semi-continuous and quasi-concave, then $C_f(\mathcal{E}) \neq \emptyset$.
\end{corollary}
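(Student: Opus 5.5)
This follows by chaining Proposition \ref{pro_2} into Theorem \ref{thm3-1}. Theorem \ref{thm3-1} needs four hypotheses: Assumption \ref{items:A1}, Assumption \ref{items:A2}, quasi-concavity of every $u_i$, and local fuzzy-coalition stability (Definition \ref{def9}). The first three are assumed directly in Corollary \ref{cor3-2}. So the only thing to establish is local fuzzy-coalition stability, and Proposition \ref{pro_2} derives it from upper semi-continuity of each $u_i$.

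It is worth checking that the data produced by Proposition \ref{pro_2} actually meet Definition \ref{def9}. Take $x\in F(\mathcal{E})\setminus C_f(\mathcal{E})$. By Definition \ref{def2} there is a blocking pair: a coalition $S$, a vector $\pi\in\Delta_S$ with full carrier $S$, and $z\in X_S$ with $\sum_{i\in S}\pi_i z_i=\sum_{i\in S}\pi_i w_i$. Upper semi-continuity of each $u_i$ at $x_i$ gives a neighborhood of $x$ in $F(\mathcal{E})$ on which the strict inequalities $u_i(z_i)>u_i(y_i)$ persist for $i\in S$. Intersecting these neighborhoods over the finitely many $i\in S$ gives $O(x)$.

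Definition \ref{def9} formally asks for a pair $(\pi_T(x),z_T(x))$ for every $T\in\mathcal{N}$, not just for $S$. For $T\ne S$ I would take $\pi_T$ to be the barycenter of $\Delta_T$ and $z_T=(w_i)_{i\in T}$. These satisfy the budget identity trivially and are never needed as the improving coalition. For $T=S$ we use the blocking pair, and $S$ then witnesses the required improvement for every $y\in O(x)\setminus C_f(\mathcal{E})$.

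Invoking Theorem \ref{thm3-1} then gives $C_f(\mathcal{E})\neq\emptyset$. I do not expect a real obstacle. The only care needed is this bookkeeping of filling in the unused coalitions, and that convex combinations of $w_i\in X_i$ stay in $X_i$ by Assumption \ref{items:A1}.
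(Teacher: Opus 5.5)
Your proposal is correct and is exactly the paper's argument: the paper also obtains the corollary by feeding Proposition~\ref{pro_2} (upper semi-continuity implies local fuzzy-coalition stability) into Theorem~\ref{thm3-1}. Your extra step of supplying $(\pi_T,z_T)=(\text{barycenter of }\Delta_T,(w_i)_{i\in T})$ for the coalitions $T\neq S$ fills in a detail the paper's proof of Proposition~\ref{pro_2} leaves implicit, and it is fine; it needs only $w_i\in X_i$ from \ref{items:A1}, not convexity.
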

\begin{remark}
Corollary \ref{cor3-2} is the corresponding result in \citet{liu2022} which generalizes the example in \citet{liu2013} with continuous utility functions.
\end{remark}
If there are only two agents in the economy, the non-emptiness of the fuzzy core can be guaranteed without any quasi-concavity of the utility functions.

\begin{corollary}\label{cor3-1}
Let $\mathcal{E} = \{X_{i}, w_i, u_i\}_{i \in N}$ be a pure exchange economy with $|N| = 2$ satisfying Assumptions \ref{items:A1} and \ref{items:A2}. If the economy $\mathcal{E}$ is locally fuzzy-coalition stable, then $C_f(\mathcal{E}) \neq \emptyset$.
\end{corollary}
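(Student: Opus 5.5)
The plan is to rerun the proof of Theorem \ref{thm3-1} and locate the single place where quasi-concavity is used: the chain of inequalities \eqref{eq-4}. There we need $u_i(x_i^{\beta'}) \ge a_i$, where $x_i^{\beta'}$ is a convex combination of the bundles $z_{S,i}^k$ with $S\in\beta'(i)$ and $k\in K(S)$. When $|N|=2$ I will show that, for each agent $i$, this combination always collapses to a single point $z_{S,i}^k$ with $u_i(z_{S,i}^k)\ge a_i$. Once that is established, \eqref{eq-4} holds without quasi-concavity, and every other step of the proof of Theorem \ref{thm3-1} goes through verbatim.

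The key step is the structure of independent-$\pi$-balanced families when $n=2$. The simplex $\Delta_N$ is a segment and $m_N=(\tfrac12,\tfrac12)$ is its midpoint. The coalitions $\{1\}$ and $\{2\}$ have $\Delta_{\{1\}}=\{(1,0)\}$ and $\Delta_{\{2\}}=\{(0,1)\}$. Every $\pi_N^k$ has strictly positive coordinates. An affinely independent set of points on a line has at most two elements. Hence a representation of $m_N$ with affinely independent support is one of two kinds:
\begin{enumerate}[label=(\roman*)]
\item a single vector $\pi_N^k = m_N$;
\item two vectors lying on opposite sides of $m_N$, or one of them equal to an endpoint.
\end{enumerate}

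Next I check agent $i$'s weights $n\lambda_S^k\pi_{S,i}^k$ over $S\in\beta(i)$, which sum to $1$ by \eqref{eq-13}. In case (i) only one term appears, so $x_i^{\beta'}=z_{N,i}^k$. In case (ii) it is not automatic that only one term involves agent $i$. For example, two grand-coalition vectors $\pi_N^k$ and $\pi_N^{k'}$ on opposite sides of $m_N$ would give agent $i$ a genuine mixture of $z_{N,i}^k$ and $z_{N,i}^{k'}$. This is the main obstacle.

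To overcome it, I will exploit the freedom in the Carath\'eodory selection inside the proof of Theorem \ref{thm3-1}. I will also use the freedom in how $V(N)$ is defined, since $V(N)$ only needs to contain enough points.
\begin{itemize}
\item First, modify \eqref{eq-2} so that $V(N)$ also includes, for each feasible combination, the sets $\{a : u_i(\tilde x_i)\ge a_i\}$ with $\tilde x$ ranging over the finitely many allocations built in \eqref{eq-3}.
\item Then, for $n=2$, argue feasibility directly. In the mixed grand-coalition case, the two constraints $\sum_i \pi_{N,i}^k(z_{N,i}^k-w_i)=0$ can be combined so that $x^{\beta'}$ stays feasible.
\item Alternatively, rather than using the weighted average, I can try to choose a single $k$ with $\pi_N^k$ and lean on the fact that, with two agents, the segment structure lets me pick the representation $\{1\},\{2\}$ (giving $x^{\beta'}=w$, with $u_i(w_i)\ge a_i$ because $a\in\partial V(\{i\})$) or a single $N$-vector.
\end{itemize}

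Concretely, I would first try to show that whenever $m_N\in\operatorname{co}\{\phi_S(a) : S\in\mathcal S(a)\}$, either $\{1\},\{2\}\in\mathcal S(a)$ or some $\pi_N^k$ with $k\in K(N)$ lies on each side of $m_N$. In the first case I take $x=w$ and conclude directly. In the second case I will need the mixture, and there I would redefine $V(N)$ using \emph{both} endpoint bundles with the payoff $\max$ replaced by a comparison argument. If that fails, I would fall back on reducing to a singleton selection by perturbing the choice of $\pi_N(x)$ in Definition \ref{def9}.

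Once $u_i(x_i^{\beta'})\ge a_i$ is secured without quasi-concavity, the remaining argument is unchanged and yields the contradiction from Theorem \ref{thm3-1}'s proof:
\begin{itemize}
\item Lemma \ref{lem_1} applies;
\item $x^\beta$ is feasible by \eqref{eq-7};
\item local fuzzy-coalition stability then produces a coalition $S'$ that improves upon the core payoff $a^*$.
\end{itemize}
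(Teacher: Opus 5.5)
Your proposal has a genuine gap. The ``main obstacle'' you identify never actually arises, and the remedies you offer for it (enlarging $V(N)$, a ``comparison argument'', perturbing $\pi_N(x)$) are never turned into an argument. So the proof as proposed does not close.

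The missing observation is where \eqref{eq-4} is used in the proof of Theorem \ref{thm3-1}. It is used only to verify payoff-dependent balancedness, and only \emph{after} the case $N\in\mathcal S(a)$ has been disposed of. In that case $V(N)$ is closed, so $a\in\partial V(N)\subseteq V(N)$ and there is nothing to check. Hence \eqref{eq-4} is needed only for $\beta'=\mathcal S(a)$ with $N\notin\beta'$. The steps for $n=2$ are then:
\begin{enumerate}[label=(\roman*)]
\item $\beta'\subseteq\{\{1\},\{2\}\}$, and since $\phi_{\{1\}}\equiv(1,0)$ and $\phi_{\{2\}}\equiv(0,1)$, the condition $m_N\in\operatorname{co}\{\phi_S(a):S\in\mathcal S(a)\}$ forces $\beta'=\{\{1\},\{2\}\}$.
\item The budget constraint in Definition \ref{def9} with $\pi^k_{\{i\},i}=1$ forces $z^k_{\{i\},i}=w_i$ for every $k$.
\item By \eqref{eq-13}, $x^{\beta'}=w$.
\item $u_i(w_i)=u_i(z^k_{\{i\},i})\ge a_i$ for any $k\in K(\{i\})$, and $K(\{i\})$ is nonempty because $a\in\partial V(\{i\})$.
\end{enumerate}
This is exactly your ``first case'', and it is the paper's entire proof. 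Your ``second case'', with grand-coalition vectors $\pi_N^k,\pi_N^{k'}$ ($k,k'\in K(N)$) on both sides of $m_N$, can only occur when $N\in\mathcal S(a)$. There balancedness holds trivially, and no mixture of $z^k_{N,i}$ and $z^{k'}_{N,i}$ ever has to be evaluated.

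The fallbacks you list would not have rescued the argument either.
\begin{itemize}
\item Feasibility of $x^\beta$ was never the problem: it holds for every $\beta\in\mathcal B$ by \eqref{eq-7}. The real issue in a genuine mixture is $u_i(x_i^{\beta'})\ge a_i$, and without quasi-concavity that can fail.
\item Your proposed modification of $V(N)$ appears to coincide with \eqref{eq-2} as it already stands, since \eqref{eq-2} ranges over all allocations built in \eqref{eq-3}.
\item Perturbing $\pi_N(x)$ is not available. Definition \ref{def9} only asserts the existence of some $\pi_S(x)$ with a matching $z_S(x)$ and $O(x)$. Changing $\pi_N(x)$ forces a change in $z_N(x)$, and without continuity of $u_i$ the strict improvements on $O(x)$ need not survive.
\end{itemize}

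Once you drop the spurious obstacle, the rest of your plan is correct and coincides with the paper's: Lemma \ref{lem_1}, feasibility via \eqref{eq-7}, and the final contradiction from local fuzzy-coalition stability.
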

\begin{proof}
We check the proof of Theorem \ref{thm3-1} for the case $N = \{1,2\}$. Clearly, $\mathcal{N} = \{\{1\}, \{2\}, \{1,2\}\}$. For each $k \in K$, we have $z_{\{1\}}^k \equiv w_1$ for coalition $\{1\}$ and $z_{\{2\}}^k \equiv w_2$ for coalition $\{2\}$.

We only need to prove that the inequalities in Eq.~\eqref{eq-4} hold without using the quasi-concavity of the utility functions. For the independent-$\pi$-balanced family $\beta'$ in Eq.~\eqref{eq-4}, it holds that $N \notin \beta'$ from the proof of Theorem \ref{thm3-1}. Then, we obtain $\beta' = \{\{1\}, \{2\}\}$. So, by taking a fixed $k_0 \in K(\{i\})$, Eq.~\eqref{eq-4} can be rewritten as
\begin{equation*}
\begin{aligned}
u_i(x_i^{\beta'})
&= u_i\biggl( \sum_{S \in \beta'(i), k \in K} n \lambda_S^k(\beta') \pi^{k}_{S,i} z_{S,i}^{k} \biggr) \\
&= u_i\biggl( w_i \sum_{S \in \beta'(i), k \in K} n \lambda_S^k(\beta') \pi^{k}_{S,i} \biggr) \\
&= u_i(w_i) \\
&= u_i(z_{\{i\}}^{k_0}) \ge a_i, \quad \forall i \in N = \{1,2\}.
\end{aligned}
\end{equation*}
Therefore, Eq.~\eqref{eq-4} holds without using the quasi-concavity of the utility functions. The proof is completed.
\end{proof}

\subsection{TU-fuzzy core and its non-emptiness}

A transferable utility (TU) game is denoted by a characteristic function $v$ from $\mathcal{N}$ to $\mathbb{R}^n$. For each $S \in \mathcal{N}$, the value $v(S)$ denotes the attainable payoff of the crisp coalition $S$. The payoff $v(S)$ can be redistributed arbitrarily among the members of coalition $S$. Employing the spirit of transferable utility, we introduce the following TU-fuzzy core of an exchange economy by Definition \ref{def:weighted-TU-core}.
\begin{definition}\label{def:weighted-TU-core}
Let $\mathcal{E} = \{X_{i}, w_i, u_i\}_{i \in N}$ be a pure exchange economy.
A pair consisting of a commodity allocation and a payoff profile $(x, a) \in F(\mathcal{E}) \times \mathbb{R}^n$ is blocked by $S \in \mathcal{N}$ if there exist $x'_S \in X_S$ and $\pi_S \in \Delta_S$ with $\pi_{S,i} > 0$ for all $i \in S$ and $\sum_{i \in S} \pi_{S,i} x'_{S,i} = \sum_{i \in S} \pi_{S,i} w_i$, such that
\(
\sum_{i \in S} \pi_{S,i} u_i(x'_{S,i}) > \sum_{i \in S} \pi_{S,i} a_i.
\)
A pair $(x, a) \in F(\mathcal{E}) \times \mathbb{R}^n$ with $\sum_{i \in N} u_i(x_i) = \sum_{i \in N} a_i$ is in the TU-fuzzy core of the economy $\mathcal{E}$ if no $S \in \mathcal{N}$ can block the profile.
\end{definition}

For an exchange economy $\mathcal{E} = \{X_{i}, w_i, u_i\}_{i \in N}$, let $C_f^T(\mathcal{E})$ denote the TU-fuzzy core of $\mathcal{E}$. From Definition \ref{def:weighted-TU-core}, no fuzzy coalition $\pi$ with any carrier $S \in \mathcal{N}$ can have a successful threat $x'_S$. Of course, Definition \ref{def:weighted-TU-core} is equivalent to the following Definition \ref{def10-weighted}.

\begin{definition}\label{def10-weighted}
Let $\mathcal{E} = \{X_{i}, w_i, u_i\}_{i \in N}$ be a pure exchange economy.
A pair $(x, a) \in F(\mathcal{E}) \times \mathbb{R}^n$ with $\sum_{i \in N} u_i(x_i) = \sum_{i \in N} a_i$ is in the TU-fuzzy core of the economy $\mathcal{E}$ if there is no $x' \in X$ and no $\pi \in \Delta_N$ with $\sum_{i \in N} \pi_i x'_i = \sum_{i \in N} \pi_i w_i$ such that
\(
\sum_{i \in \operatorname{carr}(\pi)} \pi_i u_i(x'_i) > \sum_{i \in \operatorname{carr}(\pi)} \pi_i a_i.
\)
\end{definition}

For an exchange economy $\mathcal{E} = \{X_{i}, w_i, u_i\}_{i \in N}$ with bounded utility functions, define
\small{
\[
\mathcal{X} = \left\{ (x, a) \in F(\mathcal{E}) \times \mathbb{R}^n \;:\;
\sum_{i \in N} u_i(x_i) = \sum_{i \in N} a_i, \quad
\inf_{x_i \in X_i} u_i(x_i) \leq a_i, \quad \forall i \in N \right\},
\]
}
where bounded utility functions means that $u_i$ is bounded for each $i \in N$. Then $\mathcal{X}$ is a bounded set. Hence, the closure $\mathcal{X}^c$ of $\mathcal{X}$ is compact.
\begin{definition}\label{def:local-weighted-TU}
The economy $\mathcal{E}=\{X_i,w_i,u_i\}_{i\in N}$ is \emph{locally TU-fuzzy-coalition stable} if for each $(x,a)\in\mathcal X^c\setminus C_f^{T}(\mathcal{E})$, there exist
$\pi_S(x,a)\in\Delta_S$ with $\pi_{S,i}(x,a)>0$ for all $i\in S$,
$y_S(x,a)\in X_S$ satisfying
\(
\sum_{i\in S}\pi_{S,i}(x,a)y_{S,i}(x,a)
=
\sum_{i\in S}\pi_{S,i}(x,a)w_i
\)
for every $S\in\mathcal N$, and an open neighborhood $O(x,a)$ of $(x,a)$ in $\mathcal X^c$, such that for each $(x',a')\in O(x,a)\setminus C_f^{T}(\mathcal{E})$, there exists $S\in\mathcal N$ for which
\(
\sum_{i\in S}\pi_{S,i}(x,a)\) \(u_i(y_{S,i}(x,a))
>
\sum_{i\in S}\pi_{S,i}(x,a)a'_i.
\)
\end{definition}

For the proof of main result in this part, we need the following lemma.
\begin{lemma}\label{lem3-2}
Let $T$ be a finite index set. For each $k \in T$, let $p^k \in \Delta_S$ and $c^k \in \mathbb{R}$. Define
\[
A = \bigcup_{k \in T} \left\{ a \in \mathbb{R}^n \ :\ \sum_{i \in S} p_i^k a_i \le c^k \right\},
\]
and define the set-valued mapping $\phi: \partial A \rightrightarrows \Delta_S$ by
\[
\phi(a) = \operatorname{co}\left\{ p^k \in \Delta_S \ :\ \sum_{i \in S} p_i^k a_i = c^k \right\}.
\]
Then, $\phi$ is upper semi-continuous with nonempty compact convex values on $\partial A$.
\end{lemma}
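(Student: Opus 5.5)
The plan is to follow the pattern of the proof of Lemma \ref{lem3-1}: first show that the values of $\phi$ are nonempty, then that they are compact and convex, and finally that $\phi$ has a closed graph into the compact set $\Delta_S$. The last step gives upper semi-continuity.

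\emph{Nonemptiness.} This is the one step that differs from Lemma \ref{lem3-1}, because the defining condition is now an equality rather than an inequality. For each $k\in T$ put $H^k=\{a\in\mathbb{R}^n:\sum_{i\in S}p^k_i a_i\le c^k\}$. Since $p^k\in\Delta_S$, we have $\sum_{i\in S}p^k_i=1$, so $p^k\neq 0$ and $H^k$ is a genuine closed half-space. Its interior is $\{a:\sum_{i\in S}p^k_i a_i<c^k\}$.

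Now $A=\bigcup_{k\in T}H^k$ is a finite union of closed sets, hence closed. Take $a\in\partial A$. Then $a\in A$, so $a\in H^k$ for some $k$. Suppose we had $\sum_{i\in S}p^k_i a_i<c^k$ for every such $k$. Then $a\in\operatorname{int}H^k\subset\operatorname{int}A$, which contradicts $a\in\partial A$.

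It follows that every $k$ with $a\in H^k$ satisfies $\sum_{i\in S}p^k_ia_i=c^k$, and there is at least one such $k$. So $\phi(a)\neq\emptyset$.

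\emph{Compactness and convexity.} For each $a$, the value $\phi(a)$ is the convex hull of a finite subset of $\{p^k:k\in T\}$. It is therefore a polytope, which is compact and convex, and it lies in $\Delta_S$.

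\emph{Closed graph.} Since $\Delta_S$ is compact, upper semi-continuity follows once the graph of $\phi$ is shown to be closed. Let $a^m\to a^0$ in $\partial A$, and let $y^m\in\phi(a^m)$ with $y^m\to y^0$.

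Write $y^m=\sum_{k\in T}\lambda^m_k p^k$ with $\lambda^m\in\Delta_T$ and the support condition
\[
\lambda^m_k>0\ \Rightarrow\ \sum_{i\in S}p^k_i a^m_i=c^k .
\]
By compactness of $\Delta_T$, pass to a subsequence with $\lambda^m\to\lambda^0$; then $y^0=\sum_{k\in T}\lambda^0_kp^k$.

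If $\lambda^0_k>0$, then $\lambda^m_k>0$ for all large $m$, so $\sum_{i\in S}p^k_ia^m_i=c^k$ for those $m$. Letting $m\to\infty$ and using continuity of the linear map $a\mapsto\sum_{i\in S}p^k_ia_i$ gives $\sum_{i\in S}p^k_ia^0_i=c^k$.

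Hence only indices in $I(a^0)=\{k:\sum_{i\in S}p^k_ia^0_i=c^k\}$ carry positive weight in $\lambda^0$. Therefore $y^0\in\operatorname{co}\{p^k:k\in I(a^0)\}=\phi(a^0)$.

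\emph{Main obstacle.} The only step needing care is nonemptiness. It relies on two facts: each $p^k$ is nonzero, so strict inequality defines an open set, and the boundary point $a$ lies in the closed set $A$.
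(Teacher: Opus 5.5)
Your proof is correct and follows the paper's argument essentially step for step. Nonempty values come from an active index at each boundary point, compact convex values come from taking convex hulls of finitely many $p^k$, and the closed graph comes from a convergent subsequence of weights in $\Delta_T$ whose limiting support contains only indices active at $a^0$. The one difference is that you actually justify nonemptiness, using that $A$ is closed, $p^k\neq 0$, and $\operatorname{int}H^k\subset\operatorname{int}A$; the paper simply asserts that a boundary point of a finite union of closed half-spaces lies on the boundary of one of them. (Your contradiction argument, applied to each $k$ with $a\in H^k$ separately, is what gives your ``every $k$'' claim, though only one active index is needed.)
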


\begin{proof}
For any $a \in \partial A$, since $A$ is a finite union of closed half-spaces, $a$ lies on the boundary of at least one half-space. Thus, there exists $k \in T$ such that
\(
\sum_{i \in S} p_i^k a_i = c^k,
\)
so the set of active indices is nonempty. Consequently, $\phi(a)$ is nonempty. Clearly, it is convex and compact in $\Delta_S$.

Since $\Delta_S$ is compact, it suffices to show that the graph of $\phi$ is closed.
Let $\{a^m\} \subset \partial A$ with $a^m \to a^0 \in \partial A$. Take $y^m \in \phi(a^m)$ with $y^m \to y^0$. It needs to show that $y^0 \in \phi(a^0)$.
For each $m$, by the definition of $\phi(a^m)$, the point $y^m$ can be written as a convex combination of those $p^k$ that are active at $a^m$. Thus, there exist coefficients $\lambda_k^m \ge 0$ for $k \in T$ such that
\[
\sum_{k \in T} \lambda_k^m = 1,
\qquad
y^m = \sum_{k \in T} \lambda_k^m p^k,
\]
and
\begin{equation}\label{implication}
\lambda_k^m > 0 \quad \Longrightarrow \quad \sum_{i \in S} p_i^k a_i^m = c^k.
\end{equation}
Since $T$ is finite, the simplex
\(
\Delta_T = \left\{ \lambda \in \mathbb{R}^{|T|}_+  :  \sum_{k \in T} \lambda_k = 1 \right\}
\)
is compact. Hence the sequence $\{\lambda^m\}$ admits a convergent subsequence, which we still denote by $\{\lambda^m\}$, with limit $\lambda^0 \in \Delta_T$. By continuity of the finite sum,
\[
y^0 = \lim_{m \to \infty} y^m
= \lim_{m \to \infty} \sum_{k \in T} \lambda_k^m p^k
= \sum_{k \in T} \lambda_k^0 p^k.
\]

Now suppose $\lambda_k^0 > 0$ for some $k \in T$. Then, for all sufficiently large $m$, we have $\lambda_k^m > 0$, so by the implication \eqref{implication} above,
\(
\sum_{i \in S} p_i^k a_i^m = c^k.
\)
Taking the limit as $m \to \infty$ and using $a_i^m \to a_i^0$, we obtain
\(
\sum_{i \in S} p_i^k a_i^0 = c^k.
\)
Therefore, every index $k$ with positive coefficient $\lambda_k^0$ is active at $a^0$. Hence
\[
y^0 = \sum_{k \in T} \lambda_k^0 p^k
= \sum_{\{k\in T : \lambda_k^0 > 0\}} \lambda_k^0 p^k
\in \operatorname{co}\left\{ p^k\in \Delta_S : \sum_{i \in S} p_i^k a_i^0 = c^k \right\}
= \phi(a^0).
\]
This proves that the graph of $\phi$ is closed. Therefore, $\phi$ is upper semi-continuous.
\end{proof}

By constructing an NTU game that is payoff-dependent balanced with respect to a special transfer rate rule, the second main result (Theorem \ref{thm3-2}) shows the non-emptiness of the TU-fuzzy core for a locally TU-fuzzy-coalition-stable economy with concave utility functions.
\begin{theorem}\label{thm3-2}
Let $\mathcal{E}=\{X_i,w_i,u_i\}_{i\in N}$ be a pure exchange economy satisfying local TU-fuzzy-coalition stability and Assumptions \ref{items:A1} and \ref{items:A2}. Assume that for each $i\in N$, the utility function $u_i$ is concave. Then, the TU-fuzzy core $C_f^{T}(\mathcal{E})$ is nonempty.
\end{theorem}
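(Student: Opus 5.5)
We argue by contradiction, mirroring the proof of Theorem \ref{thm3-1}, with Lemma \ref{lem3-2} in place of Lemma \ref{lem3-1}. Suppose $C_f^T(\mathcal{E})=\emptyset$. Then every $(x,a)\in\mathcal X^c$ lies outside the TU-fuzzy core, so local TU-fuzzy-coalition stability gives, for each $S\in\mathcal N$, data $\pi_S(x,a)$, $y_S(x,a)$ and a neighborhood $O(x,a)$ (implicitly, the utilities are taken bounded so that $\mathcal X^c$ is defined and compact). By compactness of $\mathcal X^c$ we extract a finite subcover $\{O(x^k,a^k):k\in K\}$ and write $\pi_S^k,y_S^k$ for the associated data. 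Let $\mathcal B$ be the (finite) collection of independent-$\pi$-balanced families for these $\pi_S^k$, as in Definition \ref{def4}.

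Next we build a half-space NTU game. For $S\ne N$ set
\[
V(S)=\bigcup_{k\in K}\Bigl\{a\in\mathbb R^n:\ \sum_{i\in S}\pi^k_{S,i}a_i\le c_S^k\Bigr\},\qquad c_S^k=\sum_{i\in S}\pi^k_{S,i}u_i(y^k_{S,i}).
\]
For each $\beta\in\mathcal B$ and each admissible weight vector $\lambda(\beta)$, define $x^\beta_i=\sum_{S\in\beta(i),k}n\lambda^k_S\pi^k_{S,i}y^k_{S,i}$ exactly as in Eq.~\eqref{eq-3}. By Eq.~\eqref{eq-13} and the computation in Eq.~\eqref{eq-7}, $x^\beta\in F(\mathcal{E})$. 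Set
\[
V(N)=\bigcup_{\beta}\Bigl\{a:\ \sum_{i\in N}a_i\le \sum_{i\in N}u_i(x^\beta_i)\Bigr\},
\]
or, to fit the pattern, use weight $m_N$ (which is the same up to scaling). These are finite unions of closed half-spaces, so they are nonempty, closed and comprehensive, giving (B1) and (B2). The transfer rule $\phi_S(a)=\operatorname{co}\{\pi_S^k:\sum_{i\in S}\pi^k_{S,i}a_i=c^k_S\}$ is upper semi-continuous by Lemma \ref{lem3-2}. Assumption (B3) needs care, because half-spaces are not bounded above coordinatewise. I expect this to be the main obstacle. The first thing I would try is to intersect each $V(S)$ with $\{a: a_i\le \bar u\}$ for a large bound $\bar u$ coming from boundedness of the $u_i$, then re-check comprehensiveness and use the fact that the eventual core point lies where the truncation does not bind. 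Failing that, I would invoke the version of Lemma \ref{lem_1} in Bonnisseau--Iehlé, which allows individually rational boundedness via $\operatorname{int}V(\{i\})$.

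For payoff-dependent balancedness, take $a\in\partial W$ with $m_N\in\operatorname{co}\{\phi_S(a):S\in\mathcal S(a)\}$ and $N\notin\mathcal S(a)$. Carathéodory yields an independent-$\pi$-balanced $\beta'$ with weights $\lambda$ supported on active $k$, that is, $\sum_{i\in S}\pi^k_{S,i}a_i=c^k_S$. Concavity of $u_i$ applied to the convex combination $x_i^{\beta'}$ then gives
\[
\sum_i u_i(x_i^{\beta'})\ \ge\ \sum_i\sum_{S\ni i,k}n\lambda_S^k\pi_{S,i}^ku_i(y^k_{S,i})=n\sum_{S,k}\lambda_S^kc_S^k=n\sum_{S,k}\lambda^k_S\sum_{i\in S}\pi^k_{S,i}a_i=\sum_i a_i.
\]
The final equality uses $\sum\lambda\pi=m_N$. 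Hence $a\in V(N)$. This is where concavity replaces the quasi-concavity used in Theorem \ref{thm3-1}: because the TU comparison is additive, we need Jensen's inequality rather than a min bound.

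Lemma \ref{lem_1} now gives a core point $a^*$ with $\sum_i a^*_i\le\sum_i u_i(x^\beta_i)$ for some $\beta$. We raise $a^*$ coordinatewise, for instance by adding the slack to one coordinate, to get $\hat a\ge a^*$ with $\sum_i\hat a_i=\sum_i u_i(x^\beta_i)$; then $(x^\beta,\hat a)\in\mathcal X^c$, after checking the lower bounds $\hat a_i\ge\inf u_i$, which may need $a^*$ individually rational via the $V(\{i\})$. Some $O(x^k,a^k)$ contains $(x^\beta,\hat a)$, and stability gives $S'$ with $c^k_{S'}>\sum_{i\in S'}\pi^k_{S',i}\hat a_i\ge\sum_{i\in S'}\pi^k_{S',i}a^*_i$. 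Hence $a^*\in\operatorname{int}V(S')$, contradicting that $a^*$ is in the core.
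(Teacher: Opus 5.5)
Your proposal is the paper's proof: the same half-space sets $V(S)$ for $S\neq N$, the same $V(N)=\bigcup_\beta\{a:\sum_i a_i\le \sum_i u_i(x^\beta_i)\}=\{a:\sum_i a_i\le v(N)\}$, the transfer rule from Lemma \ref{lem3-2}, and the same Carath\'eodory-plus-Jensen check of payoff-dependent balancedness. There is a genuine gap in your last step, and the paper's final step has the same omission. Stability gives some $S'\in\mathcal N$ with $c^k_{S'}>\sum_{i\in S'}\pi^k_{S',i}\hat a_i$, and nothing excludes $S'=N$. For $S'\neq N$ the conclusion $a^*\in\operatorname{int}V(S')$ is correct. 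For $S'=N$ it is not, because $V(N)$ is not built from the half-spaces $\{a:\sum_i\pi^k_{N,i}a_i\le c^k_N\}$; you would need $\sum_i a^*_i<v(N)$, and this does not follow. In general $m_N\notin\operatorname{co}\{\pi^k_N:k\in K\}$, so any balanced family using $\pi^k_N$ must mix it with other coalitions. For those coalitions you only know $c^{k'}_S\le\sum_{i\in S}\pi^{k'}_{S,i}a^*_i$, which points the wrong way in the Jensen bound.

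This is not a removable technicality. The same argument without concavity is the paper's proof of Corollary \ref{cor3-3}, and that corollary is false. Take $L=\mathbb R$, $X_1=X_2=[0,2]$, $w_1=w_2=1$ and $u_i(y)=\max\{0,1-4|y-\tfrac32|\}$. These utilities are continuous, so $\mathcal X^c=\mathcal X$, and any block is a strict inequality in $a'$ that persists on a neighbourhood; hence local TU-fuzzy-coalition stability holds. Yet $u_1(x_1)+u_2(x_2)\le 1$ on $F(\mathcal E)$, since at most one $x_i$ can lie in $(\tfrac54,\tfrac74)$. The blocks $\pi=(\tfrac23,\tfrac13)$, $y=(\tfrac32,0)$ and $\pi=(\tfrac13,\tfrac23)$, $y=(0,\tfrac32)$ force $a_1+a_2\ge\tfrac43$, so $C_f^T(\mathcal E)=\emptyset$. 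When the argument is run on this example, $a^*_i\ge u_i(w_i)=0$ rules out singleton blocks, so the blocker is necessarily $N$.

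You can repair the step with ingredients you already have.
\begin{itemize}
\item Enlarge $V(N)$ to $\{a:\sum_i a_i\le v(N)\}\cup\bigcup_{k\in K}\{a:\sum_i\pi^k_{N,i}a_i\le c^k_N\}$.
\item Take $\phi_N$ from Lemma \ref{lem3-2}, with the extra generator $m_N$ at level $v(N)/n$.
\end{itemize}
Assumptions (B1)--(B3) and balancedness still hold, and a block by $N$ now does put $a^*$ in $\operatorname{int}V(N)$. What remains is to show $\sum_i a^*_i=v(N)$. The inequality $\ge$ follows from $a^*\notin\operatorname{int}V(N)$. Suppose it were strict. The second conclusion of Lemma \ref{lem_1}, $m_N\in\operatorname{co}\{\phi_S(a^*):S\in\mathcal S(a^*)\}$, then expresses $m_N$ through active $\pi^k_S$ only, possibly with $S=N$. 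Carath\'eodory and your Jensen computation then give $\beta\in\mathcal B$ with $\sum_i u_i(x^\beta_i)\ge\sum_i a^*_i>v(N)$, which is absurd. Concavity is exactly what carries the $N$-blocks here, consistent with the counterexample.

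Two minor points. First, (B3) is not an obstacle: as stated it is already the individually rational version. Since $y^k_{\{i\},i}=w_i$, the condition $v\notin\operatorname{int}V(\{i\})$ means $v_i\ge u_i(w_i)$. Then $\pi^k_{S,j}>0$, or $\sum_i v_i\le v(N)$ when $S=N$, bounds each $v_j$. Truncation is unnecessary and would create faces that Lemma \ref{lem3-2} does not cover. Second, raising $a^*$ to $\hat a$ is valid but superfluous, since $a^*\notin\operatorname{int}V(N)$ already forces $\sum_i a^*_i=v(N)$.
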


\begin{proof}
Suppose, by contradiction, that $C_f^{T}(\mathcal{E})=\emptyset$. Then, for every $(x,a)\in\mathcal X^c$, the local TU-fuzzy-coalition stability condition provides data $\pi_S(x,a)$, $y_S(x,a)$, and an open neighborhood $O(x,a)$ as in Definition~\ref{def:local-weighted-TU}. Since $\mathcal{X}^c$ is compact, there exists a finite set $K$ such that $\bigcup_{k \in K} O(x^k, a^k)$ is a subcover of the open cover $\bigcup_{(x,a) \in \mathcal{X}^c} O(x,a)$ of $\mathcal{X}^c$.  For each $k\in K$ and $S\in\mathcal N$, write
\[
\pi_S^k=\pi_S(x^k,a^k),\qquad y_S^k=y_S(x^k,a^k).
\]
Let $\mathcal B$ be the collection of all independent-$\pi$-balanced families of $\mathcal N$ with respect to $\{\pi_S^k\}_{S\in\mathcal N,k\in K}$.

For each $S\in\mathcal N$ with $S\ne N$, define the feasible payoff set by
\begin{equation}\label{eq:VSw}
V(S)
=
\bigcup_{k\in K}
\left\{
a\in\mathbb R^n
\;:\;
\sum_{i\in S}\pi_{S,i}^k a_i
\le
\sum_{i\in S}\pi_{S,i}^k u_i(y_{S,i}^k)
\right\}.
\end{equation}
For the grand coalition, set
\begin{equation}\label{eq:VNw}
V(N)
=
\left\{
a\in\mathbb R^n
\;:\;
\sum_{i\in N}a_i
\le
v(N)
\right\},
\end{equation}
where
\begin{equation}\label{eq:vNw}
v(N)
=
\max_{\beta\in\mathcal B}
\sum_{i\in N}u_i(x_i^\beta),
\end{equation}
and
\begin{equation}\label{eq:xbeta-w}
x_i^\beta
=
\sum_{S\in\beta(i),k\in K}
n\lambda_S^k(\beta)\pi_{S,i}^k y_{S,i}^k,
\end{equation}
with $\lambda_S^k(\beta)$ ranging over all finite balanced weights associated with the independent -$\pi$-balanced family $\beta$.

The set $V(S)$ is clearly nonempty and comprehensive for each $S\in \mathcal{N}$, and is closed for each $S\in \mathcal{N}\backslash N$. From the finiteness of $\mathcal{B}$ and the finiteness of balanced weights for each $\beta\in \mathcal{B}$, $V(N)$ is also closed; and each $V(S)$ is bounded above. Therefore, the game satisfies Assumptions \ref{item:B1}--\ref{item:B3}.

For each $S\ne N$, define the transfer rate rule $\phi_S:\partial V(S)\rightrightarrows\Delta_S$ by
\begin{equation}\label{eq:phiSw}
\phi_S(a)
=
\operatorname{co}
\left\{
\pi_S^k\in\Delta_S
\;:\;
k\in K,\
\sum_{i\in S}\pi_{S,i}^k a_i
=
\sum_{i\in S}\pi_{S,i}^k u_i(y_{S,i}^k)
\right\},
\end{equation}
for each $a\in \partial V(S)$. Obviously, each $\phi_S$ has nonempty compact convex values.
 Furthermore, by Lemma~\ref{lem3-2}, if we take $T=K$, $p^k=\pi_S^k$, and $c^k=\sum_{i \in S}\pi_{S,i}^k u_i(y_{S,i}^k)$, it obtains the upper semi-continuity of $\phi_S$.

We now verify that the game $V$ is payoff-dependent balanced with respect to $((\phi_S)_{S\in\mathcal N}$, $m_N)$. Let $a\in\partial W$, where $W=\bigcup_{S\in\mathcal N}V(S)$, and suppose that
\[
m_N\in\operatorname{co}\{\phi_S(a) :  S\in\mathcal S(a)\},
\]
with $\mathcal S(a)=\{S\in\mathcal N :  a\in\partial V(S)\}$. If $N\in\mathcal S(a)$, then immediately $a\in V(N)$. Hence assume $N\notin\mathcal S(a)$.

Similar with the corresponding proof in Theorem \ref{thm3-1}, there exists an independent-$\pi$-balanced family $\beta'=\mathcal S(a)\in \mathcal{B}$ and balanced weight coefficients $\lambda_S^k(\beta')\ge 0$ with respect to $\beta'$ such that
\[
m_N
=
\sum_{S\in\beta'}\sum_{k\in K(S)}
\lambda_S^k(\beta')\pi_S^k,
\]
and other $\lambda_S^k(\beta')= 0$ for each $S\in\beta', k\not\in K(S)$,
where
\[
K(S)
=
\left\{
k\in K
\;:\;
\sum_{i\in S}\pi_{S,i}^k a_i
=
\sum_{i\in S}\pi_{S,i}^k u_i(y_{S,i}^k)
\right\}.
\]
Thus, for every $i\in N$,
\begin{equation}\label{eq:bal-w}
\sum_{S\in\beta'(i),k\in K(S)}
n\lambda_S^k(\beta')\pi_{S,i}^k
=
1.
\end{equation}
Note that for each $S\in\beta'$ and $k\in K(S)$, we have
\begin{equation}\label{eq:active-w}
\sum_{i\in S}\pi_{S,i}^k u_i(y_{S,i}^k)
=
\sum_{i\in S}\pi_{S,i}^k a_i.
\end{equation}

Using the concavity of each $u_i$ and the balanced weights, we obtain
\begin{equation}\label{vN21}
\begin{aligned}
v(N)
&\ge
\sum_{i\in N}u_i(x_i^{\beta'})\\
&= \sum_{i \in N} u_i\biggl( \sum_{S \in \beta'(i), k \in K} n \lambda^k_S(\beta') \pi_{S,i}^{k} y_{S,i}^{k} \biggr) \\
&=
\sum_{i\in N}
u_i\left(
\sum_{S\in\beta'(i),k\in K(S)}
n\lambda_S^k(\beta')\pi_{S,i}^k y_{S,i}^k
\right)\\
&\ge
\sum_{i\in N}
\sum_{S\in\beta'(i),k\in K(S)}
n\lambda_S^k(\beta')\pi_{S,i}^k u_i(y_{S,i}^k)\\
&=
\sum_{S\in\beta'}\sum_{k\in K(S)}
n\lambda_S^k(\beta')
\sum_{i\in S}
\pi_{S,i}^k u_i(y_{S,i}^k).
\end{aligned}
\end{equation}
By \eqref{eq:active-w}, the last expression in \eqref{vN21} equals
\[
\sum_{S\in\beta'}\sum_{k\in K(S)}
n\lambda_S^k(\beta')
\sum_{i\in S}
\pi_{S,i}^k a_i
=
\sum_{i\in N}
a_i
\left(
\sum_{S\in\beta'(i),k\in K(S)}
n\lambda_S^k(\beta')\pi_{S,i}^k
\right).
\]
Using \eqref{eq:bal-w}, we have that
\begin{equation}\label{vNinquality}
v(N)\ge \sum_{i\in N}a_i,
\end{equation}
so $a\in V(N)$ by \eqref{eq:VNw}. Therefore, the game $V$ is payoff-dependent balanced.

By Lemma~\ref{lem_1}, there exists a core payoff vector
\[
\bar{a}\in
V(N)\setminus \bigcup_{S\in\mathcal N}\operatorname{int}V(S).
\]
From the construction of $V(N)$, it follows that
\(
\sum_{i\in N}\bar a_i
=
v(N),
\)
and there exist $\beta\in\mathcal B$ and associated balanced weights such that
\begin{equation}\label{eq-core}
\sum_{i\in N}\bar a_i
=
v(N)
=
\sum_{i\in N}u_i(x_i^\beta).
\end{equation}

As in the proof of Theorem~\ref{thm3-1}, one verifies that $x^\beta\in F(\mathcal{E})$. Moreover, we assert that
\begin{equation}\label{eq-inf}
\inf_{x_i \in X_i} u_i(x_i) \leq \bar{a}_i, \quad \forall i \in N.
\end{equation}
In fact, for each singleton coalition $S=\{i\}$, since $\pi_{\{i\},i}^k=1$, the feasible set reduces to
\[
V(\{i\})
=
\left\{a\in\mathbb R^n \;:\; a_i\le \max_{k\in K}u_i(y_{\{i\},i}^k)\right\}.
\]
Because $\bar a$ is a core payoff vector, $\bar a\notin \operatorname{int}V(\{i\})$, which implies
\[
\bar a_i
\ge
\max_{k\in K}u_i(y_{\{i\},i}^k)
\ge
\inf_{z_i\in X_i}u_i(z_i).
\]
As $i\in N$ is arbitrary, this yields \eqref{eq-inf}. Combining this with \eqref{eq-core}, we conclude that $(x^\beta,\bar a)\in\mathcal X$.

By the finite open cover, there exists $k\in K$ such that
\(
(x^\beta,\bar a)\in O(x^k,a^k).
\)
If $(x^\beta,\bar a)\in C_f^{T}(\mathcal{E})$, the proof is complete. Otherwise, by local TU-fuzzy-coalition stability, there exists $S\in\mathcal N$ such that
\[
\sum_{i\in S}\pi_{S,i}^k u_i(y_{S,i}^k)
>
\sum_{i\in S}\pi_{S,i}^k \bar a_i.
\]
According to \eqref{eq:VSw}, this strict inequality implies that
\[
\bar a
\in
\operatorname{int}
\left\{
a\in\mathbb R^n
\;:\;
\sum_{i\in S}\pi_{S,i}^k a_i
\le
\sum_{i\in S}\pi_{S,i}^k u_i(y_{S,i}^k)
\right\}
\subset
\operatorname{int}V(S),
\]
contradicting the fact that $\bar a$ is a core payoff vector. Therefore $C_f^{T}(\mathcal{E})\ne\emptyset$.
\end{proof}

 When there are only two agents, the TU-fuzzy core is always nonempty without any concavity of the utility functions.
\begin{corollary}\label{cor3-3}
Let $\mathcal{E} = \{X_{i}, w_i, u_i\}_{i \in N}$ be a pure exchange economy with $|N| = 2$ satisfying Assumptions \ref{items:A1} and \ref{items:A2}. If $\mathcal{E}$ is locally TU-fuzzy-coalition stable, then $C^T_f(\mathcal{E}) \neq \emptyset$.
\end{corollary}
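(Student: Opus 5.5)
The plan is to rerun the proof of Theorem \ref{thm3-1} specialized to $N=\{1,2\}$, as in Corollary \ref{cor3-1}. Concavity of $u_i$ entered the proof of Theorem \ref{thm3-2} at exactly one place: the inequality chain \eqref{vN21}, which gives $v(N)\ge\sum_{i\in N}a_i$ and hence payoff-dependent balancedness. Everything else carries over unchanged. The finite subcover of $\mathcal X^c$, the construction \eqref{eq:VSw}--\eqref{eq:xbeta-w}, Assumptions \ref{item:B1}--\ref{item:B3}, the upper semicontinuity of $\phi_S$ via Lemma \ref{lem3-2}, the feasibility of $x^\beta$, the membership $(x^\beta,\bar a)\in\mathcal X$, and the final contradiction with local TU-fuzzy-coalition stability use only Assumptions \ref{items:A1}--\ref{items:A2} and the stability hypothesis. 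So I only need to establish $v(N)\ge a_1+a_2$ without concavity in the balancedness step.

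Here $\mathcal N=\{\{1\},\{2\},\{1,2\}\}$. For singletons, $\Delta_{\{i\}}$ forces $\pi^k_{\{i\},i}=1$, and the budget constraint in Definition \ref{def:local-weighted-TU} forces $y^k_{\{i\},i}=w_i$ for every $k$. Hence $V(\{i\})=\{a: a_i\le u_i(w_i)\}$, and $K(\{i\})=K$ whenever $a\in\partial V(\{i\})$, i.e.\ whenever $a_i=u_i(w_i)$.

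In the balancedness step we have assumed $N\notin\mathcal S(a)$, so $\beta'=\mathcal S(a)\subseteq\{\{1\},\{2\}\}$. The vectors $\pi^k_{\{1\}}=e_1$ and $\pi^k_{\{2\}}=e_2$ can only represent $m_N=(\tfrac12,\tfrac12)$ if both singletons appear, so $\beta'=\{\{1\},\{2\}\}$ and $a_i=u_i(w_i)$ for $i=1,2$. By \eqref{eq:bal-w}, the weights in \eqref{eq:xbeta-w} satisfy $\sum_k n\lambda^k_{\{i\}}(\beta')=1$. Since all $y^k_{\{i\},i}$ equal $w_i$, the convex combination in $x_i^{\beta'}$ is simply $w_i$, with no concavity needed. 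Therefore
\[
v(N)\ge u_1(x_1^{\beta'})+u_2(x_2^{\beta'})=u_1(w_1)+u_2(w_2)=a_1+a_2,
\]
so $a\in V(N)$, which is \eqref{vNinquality}.

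Lemma \ref{lem_1} then yields a core payoff $\bar a$, and the closing argument of Theorem \ref{thm3-2} goes through verbatim to give a contradiction. The main obstacle I anticipate is being careful that the case $N\notin\mathcal S(a)$ really forces both singletons to be active at $a$. This follows from the affine-independence/Carathéodory extraction and the fact that $m_N$ has full support. I would also check that $V(N)$ and the constructions remain well defined; they do, since $\mathcal B$ is finite and contains $\{\{1\},\{2\}\}$.
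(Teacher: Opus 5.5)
Your proposal is correct and is essentially the paper's own proof. You specialize the proof of Theorem \ref{thm3-2} to $N=\{1,2\}$ (your opening sentence cites Theorem \ref{thm3-1}, evidently a slip), note that $N\notin\mathcal S(a)$ forces $\beta'=\{\{1\},\{2\}\}$ with $y^k_{\{i\},i}=w_i$ and $a_i=u_i(w_i)$, and conclude that $x_i^{\beta'}=w_i$ and $v(N)\ge u_1(w_1)+u_2(w_2)=a_1+a_2$ without any concavity. You also spell out a few points the paper leaves implicit, namely why singleton threats must equal $w_i$, why both singletons must carry positive weight to represent $m_N$, and why $\mathcal B\neq\emptyset$ so that $v(N)$ is well defined.
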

\begin{proof}
We check the proof of Theorem \ref{thm3-2} for the case $N = \{1,2\}$. Obviously, $\mathcal{N} = \{\{1\}, \{2\}, \{1,2\}\}$. For each $k \in K$, for coalition $\{i\}$, we have $y_{\{i\},i}^k \equiv w_i$ and $\pi_{\{i\},i}^{k} = 1 \in \Delta_{\{i\}} = \{1\}$.

What we need to prove is that Eqs.~\eqref{eq:bal-w} and \eqref{eq:active-w} imply \eqref{vNinquality}, $v(N) \geq \sum_{i \in N} a_i\,(a\in V(N))$, without invoking the concavity of the utility functions. Since $N \notin \beta'$ for the independent-$\pi$-balanced family $\beta'$ in Eq.~\eqref{eq:bal-w}, we must have $\beta' = \{\{1\}, \{2\}\}$. For each $S = \{i\} \in \beta'$, we can rewrite \eqref{eq:active-w},
\[
\sum_{i\in S}\pi_{S,i}^k u_i(y_{S,i}^k)
=
\sum_{i\in S}\pi_{S,i}^k a_i,
\]
 as
 \begin{equation}\label{eq-30}
 u_i(w_i)
= a_i.
\end{equation}
Furthermore, Eq.~\eqref{vN21} is rewritten as
\begin{equation}\label{eq-31}
\begin{aligned}
v(N) &\geq \sum_{i \in N} u_i(x^{\beta'}_{i}) \\
&= \sum_{i \in N} u_i\biggl( \sum_{S \in \beta'(i), k \in K} n \lambda^k_S(\beta') \pi_{S,i}^{k} y_{S,i}^{k} \biggr) \\
&= \sum_{i \in N} u_i\biggl( \sum_{S \in \beta'(i), k \in K(S)} n \lambda^k_S(\beta') \pi_{S,i}^{k} y_{S,i}^{k} \biggr) \\
&= \sum_{i \in N} u_i\biggl( w_i \sum_{S \in \beta'(i), k \in K(S)} n \lambda^k_S(\beta') \pi_{S,i}^{k} \biggr) \\
&= \sum_{i \in N} u_i(w_i).
\end{aligned}
\end{equation}
From Eqs.~\eqref{eq-30} and \eqref{eq-31}, we have $v(N) \geq \sum_{i \in N} a_i$. The proof is completed.
\end{proof}

Without the concavity of the utility functions, Corollary \ref{cor3-3} shows that if an economy with two agents satisfies local TU-fuzzy-coalition stability, then it will possess a nonempty TU-fuzzy core. A question is whether the concavity of the utility functions in Theorem \ref{thm3-2} is necessary.

The following Example \ref{exam-3} shows the complexity of the (TU-)fuzzy core. The example, with three agents and quasi-concave (but not concave) utility functions, satisfies local TU-fuzzy-coalition stability, and it still has a nonempty TU-fuzzy core. Thus, Theorem \ref{thm3-2} is a sufficient condition for the non-emptiness of the TU-fuzzy core of an economy. In addition, the example does not satisfy local fuzzy-coalition stability, and yet it still has a nonempty fuzzy core.

\begin{example}\label{exam-3}
Let $L = \mathbb{R}^3$ and $N = \{1,2,3\}$. For each $i \in N$, the consumption set is
\[
X_i = [0,2]^3 \subset \mathbb{R}^3 = L.
\]
The initial endowments are
\(
w_1 = (1,0,0),\qquad w_2 = (0,1,0),\qquad w_3 = (0,0,1).
\)

For each $i \in N$, define $u_i : X_i \to \mathbb{R}$ as follows:
\[
u_1(x_{1}) =
\begin{cases}
3, & x_{1,1} + x_{1,2} + x_{1,3} = 1 \text{ and } x_{1,2} + x_{1,3} \le 0.5, \\
0, & \text{otherwise},
\end{cases}
\]
\[
u_2(x_{2}) =
\begin{cases}
1, & x_{2,1} + x_{2,2} + x_{2,3} = 1 \text{ and } x_{2,1} + x_{2,3} < 0.75, \\
0, & \text{otherwise},
\end{cases}
\]
and
\[
u_3(x_{3}) =
\begin{cases}
1, & x_{3,1} + x_{3,2} + x_{3,3} = 1 \text{ and } x_{3,1} + x_{3,2} \le 0.5, \\
0, & \text{otherwise}.
\end{cases}
\]

Then, an economy $\mathcal{E} = \{X_{i}, w_i, u_i\}_{i \in N}$ is defined. The feasible set is
\small{
\[
F(\mathcal E) = \left\{ (x_1,x_2,x_3) \in X_1 \times X_2 \times X_3 : x_1 + x_2 + x_3 = w_1 + w_2 + w_3 = (1,1,1) \right\}.
\]
}
Clearly,
\small{
\[
\mathcal{X} = \left\{ (x, a) \in F(\mathcal{E}) \times \mathbb{R}^n \;:\;
\sum_{i \in N} u_i(x_i) = \sum_{i \in N} a_i, \quad
\inf_{x_i \in X_i} u_i(x_i) = 0 \leq a_i, \quad \forall i \in N \right\}
\]
}
is bounded and $\mathcal X^c$ is compact. All utility functions are discontinuous.

For each $i \in N$, let
\begin{align*}
H_1 &= \{ (x,y,z) \in [0,2]^3  :  x + y + z = 1,\ y + z \le 0.5 \}, \\
H_2 &= \{ (x,y,z) \in [0,2]^3  :  x + y + z = 1,\ x + z < 0.75 \}, \\
H_3 &= \{ (x,y,z) \in [0,2]^3  :  x + y + z = 1,\ x + y \le 0.5 \}.
\end{align*}
Then, the {\bf fuzzy core} of the economy is
\begin{equation}\label{eq-32}
C_f(\mathcal E) = \{ (x_1,x_2,x_3) \in F(\mathcal E)  :  x_1 \in H_1,\ x_2 \in H_2,\ x_3 \in H_3 \}.
\end{equation}
Indeed, if $x \in F(\mathcal E)$ and every agent $i$ attains its maximum utility $M_i$, then no $S \in \mathcal{N}$ can strictly improve all its members, so $x \in C_f(\mathcal E)$. Conversely, if $x \in F(\mathcal E)$ and some agent $i$ does not attain its maximum value $M_i$, then the singleton coalition $\{i\}$ can strictly improve agent $i$ by the threat $y_i = w_i$, since $u_i(w_i) = M_i > u_i(x_i)$. Hence $x \notin C_f(\mathcal E)$.

It can be checked that the {\bf TU-fuzzy core} of the economy is nonempty, and
\begin{equation}\label{eq-33}
C^T_f(\mathcal E) = \{ (x,a) \in F(\mathcal{E}) \times \mathbb{R}^n  :  x \in C_f(\mathcal E),\ a = (3,1,1) \}.
\end{equation}
Indeed, first suppose that $x \in C_f(\mathcal E)$ and $a=(3,1,1)$. Then $\sum_{i\in N}u_i(x_i)=5=\sum_{i\in N}a_i$. We show that $(x,a)$ cannot be blocked by any fuzzy coalition. If $S=\{i\}$ is a singleton, then $\pi_{\{i\}}=1$, and the best threat $y_{\{i\}}=w_i$ gives utility $u_i(w_i)$ equal to $3$ for $i=1$ and $1$ for $i=2,3$. Hence the weighted payoff of the threat, $\pi_{\{i\}}u_i(y_{\{i\}})=u_i(w_i)$, is exactly $a_i$, not strictly greater. For any coalition $S$ with $|S|\ge 2$, any convex combination for any feasible threats with positive weights, $\sum_{i\in S}\pi_{S,i}u_i(y_i)$, cannot strictly dominate $\sum_{i\in S}\pi_{S,i}a_i$ since $a_i$ equals the maximum utility for every $i$. Hence, $(x,a)$ is in the TU-fuzzy core.

Conversely, suppose $(x,a)\in C_f^T(\mathcal E)$. Then $\sum_{i\in N}u_i(x_i)=\sum_{i\in N}a_i$. Note that $\max_{x\in X}$ $\sum_{i\in N}u_i(x_i)=5$. If $a_1<3$, then the singleton coalition $\{1\}$ has a threat $y_{\{1\}}=w_1$ with $\pi_{\{1\}}=1$ such that
\[
\pi_{\{1\}}u_1(y_{\{1\}})=3>a_1=\pi_{\{1\}}a_1,
\]
so $(x,a)$ is blocked. Hence, we must have $a_1\ge 3$. Similarly, $a_2\ge 1$ and $a_3\ge 1$, because otherwise the corresponding singleton coalition would block. Since $\sum_{i\in N}a_i=\sum_{i\in N}u_i(x_i)\le 5$, we must have $a_1=3$, $a_2=1$, $a_3=1$, and $\sum_{i\in N}u_i(x_i)=5$. Therefore each $u_i(x_i)$ equals its maximum value, so $x\in C_f(\mathcal E)$. This proves \eqref{eq-33}.

Next, we show that the economy $\mathcal{E}$ does not satisfy local fuzzy-coalition stability but does satisfy local TU-fuzzy-coalition stability, by the following two propositions.

\begin{proposition}
Local fuzzy-coalition stability fails for the economy $\mathcal{E}$.
\end{proposition}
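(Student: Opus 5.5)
The plan is to exhibit one allocation $x^0\in F(\mathcal E)\setminus C_f(\mathcal E)$ at which no choice of data $(\pi_S(x^0),z_S(x^0))_{S\in\mathcal N}$ and no neighborhood $O(x^0)$ can meet the requirement of Definition \ref{def9}. The natural candidate exploits the one open constraint in the example, the strict inequality $x_{2,1}+x_{2,3}<0.75$ defining $H_2$. Take
\[
x^0_1=(0.5,0.25,0.25)\in H_1,\qquad x^0_2=(0.5,0.25,0)\notin H_2,\qquad x^0_3=(0,0.5,0.75)\in H_3 .
\]
One checks that $x^0_1+x^0_2+x^0_3=(1,1,1)$. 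Since $x^0_{2,1}+x^0_{2,3}=0.75$, we have $u_2(x^0_2)=0<1$, so $x^0\notin C_f(\mathcal E)$ by \eqref{eq-32}.

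The first step is to show that every neighborhood of $x^0$ in $F(\mathcal E)$ contains core allocations accumulating at $x^0$. For this, put $x^t=(x^0_1+(t,0,0),\,x^0_2-(t,0,0),\,x^0_3)$ with $t>0$ small. For such $t$ we have $x^t\in C_f(\mathcal E)$ and $x^t\to x^0$. This displays the failure of upper semicontinuity of $u_2$ at $x^0_2$: core points where $u_2=1$ lie arbitrarily close to a non-core point where $u_2=0$.

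The second step is to show that at $x^0$ the data cannot improve uniformly on a neighborhood. For $y\in C_f(\mathcal E)$ every agent attains its maximum $M_i$, so no $S$ can satisfy $u_i(z_{S,i}(x^0))>u_i(y_i)$ for all $i\in S$. Hence, if the local condition is read as in Example \ref{exam-2}, where improvement is demanded for every $y\in O(x^0)$, then the points $x^t\in O(x^0)$ violate it for every choice of $(\pi_S,z_S)$. This gives the claimed failure.

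The main obstacle is exactly this quantifier. Under the literal wording of Definition \ref{def9}, the points of $C_f(\mathcal E)$ are removed from $O(x^0)$. Then for any $y\in O(x^0)\setminus C_f(\mathcal E)$ some agent $i$ has $y_i\notin H_i$, and the singleton threat $z_{\{i\},i}=w_i$ would give $u_i(w_i)=M_i>u_i(y_i)$. So the argument must rest on the stronger reading used in Example \ref{exam-2}, namely that a single coalition improves on all of $O(x)$, or equivalently that the exclusion of core points is not permitted. I would state this reading explicitly before running the two steps above. If instead the literal definition is intended, then I expect the statement to be false as worded, and I would flag that rather than force a proof.
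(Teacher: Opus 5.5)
Your reading of the quantifier is correct, and it affects the paper's own proof as well. In Definition~\ref{def9} the singleton data are forced ($\pi_{\{i\},i}=1$, $z_{\{i\},i}=w_i$), and here $u_i(w_i)=M_i$ for every $i$. So for every $y\notin C_f(\mathcal{E})$ some agent has $u_i(y_i)=0<u_i(w_i)$, and the singleton $\{i\}$ improves on $y$, whatever the other data and neighborhood are. Hence the economy is locally fuzzy-coalition stable in the literal sense, and the proposition as written is false.

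The paper's argument at $x^*=((0.5,0.25,0.25),(0.5,0.25,0.25),(0,0.5,0.5))$ does not avoid this. It discards every $S$ containing agent $1$ or $3$ by looking at $x^*$ rather than at the point $y$ being tested. Its $y$ with $u_2(y_2)=1$ is then either in $C_f(\mathcal{E})$ (hence excluded) or has agent $1$ or $3$ below the maximum (hence improved on by a singleton). Its reasoning fits a reading in which one coalition must serve all of $O(x^*)\setminus C_f(\mathcal{E})$, not Definition~\ref{def9} as stated. That uniform-coalition reading and your reading without exclusion of core points are two different strengthenings, not equivalent ones, although both make the statement true here.

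The genuine gap is your witness point. Several computations are wrong:
$x^0_3=(0,0.5,0.75)$ has coordinate sum $1.25$, so $x^0_3\notin H_3$. Also $x^0_{2,1}+x^0_{2,3}=0.5$, not $0.75$; it is the coordinate sum of $x^0_2$ that equals $0.75$. Finally, $x^t_1=(0.5+t,0.25,0.25)$ has coordinate sum $1+t$, so $u_1(x^t_1)=0$ and $x^t\notin C_f(\mathcal{E})$.

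More seriously, $x^0$ is not in the closure of $C_f(\mathcal{E})$. Every $y$ close to $x^0$ has $y_3$ with coordinate sum near $1.25$, hence $u_3(y_3)=0<u_3(w_3)$. So the singleton $\{3\}$ improves on all of a small neighborhood, and the stability condition \emph{holds} at $x^0$ under every reading; your second step cannot go through there.

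You need a point of $\overline{C_f(\mathcal{E})}\setminus C_f(\mathcal{E})$, approached by perturbations that keep every bundle on the plane of coordinate sum $1$. The paper's $x^*$ works. The points $y^t=((0.5,0.25,0.25),(0.5-t,0.25+t,0.25),(t,0.5-t,0.5))$ lie in $C_f(\mathcal{E})$ for small $t>0$ and converge to $x^*$. Your observation that no agent can be strictly improved at a core point then completes the argument under the no-exclusion reading.
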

\begin{proof}
Let
\[
x^* = \big( (0.5,0.25,0.25),\ (0.5,0.25,0.25),\ (0,0.5,0.5) \big).
\]
We have $x^* \in F(\mathcal E)$ since $x^*_1 + x^*_2 + x^*_3 = (1,1,1)$. Note that
\[
u_1(x^*_1) = 3,\qquad u_2(x^*_2) = 0,\qquad u_3(x^*_3) = 1.
\]
Thus, $x^*_1 \in H_1$, $x^*_2 \notin H_2$, and $x^*_3 \in H_3$. Then $x^* \notin C_f(\mathcal E)$. For this allocation $x^* \notin C_f(\mathcal E)$, agent 1 attains the maximum utility $3$, so any $S$ containing agent 1 cannot strictly improve agent 1. Agent 3 attains the maximum utility $1$, so any $S$ containing agent 3 cannot strictly improve agent 3. The only $S$ containing neither agent 1 nor agent 3 is $S = \{2\}$.

For the singleton coalition $S = \{2\}$, the threat $z_2$ in Definition \ref{def9} must be $w_2 = (0,1,0)$, yielding threat utility $u_2(z_2) = 1$.

However, $x^*_2 = (0.5,0.25,0.25)$ satisfies $x_{2,1} + x_{2,3} = 0.75$, which is the boundary of the open set $H_2$. Therefore, in every open neighborhood $O(x^*)$ of $x^*$, there exists $y \in O(x^*)$ such that $y_2 \in H_2$, i.e., $u_2(y_2) = 1$. For such $y$, the threat condition $u_2(z_2) = 1 > u_2(y_2)$ fails.

Thus, no $S \in \mathcal{N}$, threat $z_S$, and neighborhood $O(x^*)$ can strictly improve all members of $S$ simultaneously on $O(x^*)$. Hence, the economy does not satisfy local fuzzy-coalition stability.
\end{proof}

\begin{proposition}
The economy $\mathcal{E}$ satisfies the local TU-fuzzy-coalition stability in Definition \ref{def:local-weighted-TU}.
\end{proposition}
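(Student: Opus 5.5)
The plan is to verify Definition \ref{def:local-weighted-TU} with one uniform choice of data. For every $(x,a)$, every singleton $\{i\}$ gets $\pi_{\{i\}}(x,a)=1$ and $y_{\{i\}}(x,a)=w_i$. For $|S|\ge 2$ take any admissible pair, for instance $\pi_S=m_S$ (uniform weights on $S$) and $y_{S,i}=w_i$; the constraint $\sum_{i\in S}\pi_{S,i}y_{S,i}=\sum_{i\in S}\pi_{S,i}w_i$ then holds trivially. Since $u_1(w_1)=3$ and $u_2(w_2)=u_3(w_3)=1$, each singleton threat gives agent $i$ its maximal utility $M_i$, where $M=(3,1,1)$. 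The local condition then reduces to the following: near any non-core point, every non-core point $(x',a')$ satisfies $a'_i<M_i$ for some $i$, in which case $\{i\}$ blocks via $u_i(w_i)=M_i>a'_i$.

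\textbf{Step 1 (the sums are discrete).} On $\mathcal X$, $\sum_i a_i=\sum_i u_i(x_i)\in\{0,1,\dots,5\}$ and $a_i\ge 0$. Passing to limits, every $(x,a)\in\mathcal X^c$ has $a\ge 0$ and $\sum_i a_i\in\{0,\dots,5\}$. Consequently, if $a_i\ge M_i$ for all $i$, then $a=M$.

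\textbf{Step 2 (generic case).} Let $(x,a)\in\mathcal X^c\setminus C_f^T(\mathcal E)$ with $a\neq M$. By Step 1 there is $i$ with $a_i<M_i$. Take $O(x,a)=\{(x',a')\in\mathcal X^c: a'_i<M_i\}$, which is open because the coordinate $a'_i$ is continuous. For every $(x',a')$ in this neighborhood, non-core or not, $S=\{i\}$ gives $u_i(y_{\{i\},i})=M_i>a'_i$, as required.

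\textbf{Step 3 (the case $a=M$, the main obstacle).} If $(x,M)\in\mathcal X$, then $\sum_i u_i(x_i)=5$, so $x\in C_f(\mathcal E)$, and by \eqref{eq-33} the point lies in $C_f^T(\mathcal E)$; there is nothing to check. The difficulty is closure points $(x,M)\in\mathcal X^c\setminus\mathcal X$. An example is $x_2$ on the relative boundary $x_{2,1}+x_{2,3}=0.75$ of $H_2$, approached from points of $C_f(\mathcal E)\times\{M\}$. Such a point fails the equality $\sum_i u_i(x_i)=\sum_i a_i$, so it is formally outside $C_f^T(\mathcal E)$. Yet no coalition can strictly beat $\sum_{i\in S}\pi_{S,i}M_i$, because $M_i$ is each agent's maximal utility.

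The first approach is to show that every non-core point $(x',a')$ near such a point has some $a'_i<M_i$. Step 1 then forces the dangerous points to be exactly those with $a'=M$ and $\sum_i u_i(x'_i)<5$. Hence the proof must either exclude such points from the relevant set, or read the stability condition, as the paper implicitly does, only for pairs violating the no-blocking requirement. Under that reading, pairs with $a'=M$ are never blocked, and the argument reduces entirely to Step 2. Making this reading precise, or otherwise disposing of these boundary closure points, is where I expect the real work or a needed clarification of Definition \ref{def:local-weighted-TU} to lie.
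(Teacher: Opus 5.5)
Your Steps 1 and 2 are correct. The difficulty you isolate in Step 3, however, cannot be overcome: it is a counterexample to the proposition as stated.

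Take the allocation $x^*=\bigl((0.5,0.25,0.25),(0.5,0.25,0.25),(0,0.5,0.5)\bigr)$ from the preceding proposition, together with $M=(3,1,1)$. Replace $x^*_1$ and $x^*_2$ by $(0.5+\varepsilon,0.25-\varepsilon,0.25)$ and $(0.5-\varepsilon,0.25+\varepsilon,0.25)$. The resulting allocations lie in $C_f(\mathcal E)$, so $(x^*,M)$ is a limit of points of $\mathcal X$ and belongs to $\mathcal X^c$. Yet $\sum_i u_i(x^*_i)=3+0+1=4\neq 5$, so $(x^*,M)\notin C_f^T(\mathcal E)$; this also follows from \eqref{eq-33}, since $x^*\notin C_f(\mathcal E)$.

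Now choose any data $\pi_S,y_S$ and any neighborhood $O(x^*,M)$. The test point $(x',a')=(x^*,M)$ itself lies in $O(x^*,M)\setminus C_f^T(\mathcal E)$. For every $S$ we have $\sum_{i\in S}\pi_{S,i}u_i(y_{S,i})\le\sum_{i\in S}\pi_{S,i}M_i$, because $M_i=\max u_i$. So the strict inequality never holds, and Definition~\ref{def:local-weighted-TU} fails for this economy. No completion of your Step 3, and no other proof, exists under the literal definitions. You should state this as a conclusion, not as a ``needed clarification.''

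The paper does not avoid these points by reading the condition differently, as you guessed; it simply overlooks them. In the case $x'\notin C_f(\mathcal E)$, its proof asserts that $\sum_i a'_i=\sum_i u_i(x'_i)$ ``by the definition of $\mathcal X^c$.'' That identity holds on $\mathcal X$ but is not preserved under closure when the $u_i$ are discontinuous, and $(x^*,M)$ violates it. Your Step 1 uses only the closed conditions $a\ge 0$ and $\sum_i a_i\in\{0,\dots,5\}$, which is exactly why your case $a\neq M$ is sound where the paper's case split is not.

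The natural repair is to require the strict inequality only for test points $(x',a')\in\bigl(O(x,a)\cap\mathcal X\bigr)\setminus C_f^T(\mathcal E)$. This costs nothing in Theorem~\ref{thm3-2}, whose proof invokes the condition only at $(x^\beta,\bar a)$, a point it has shown to lie in $\mathcal X$. Under that version your argument closes at once:
\begin{enumerate}
\item On $\mathcal X$, $a'=M$ forces $\sum_i u_i(x'_i)=5$, hence $x'\in C_f(\mathcal E)$ and $(x',a')\in C_f^T(\mathcal E)$.
\item So every test point has $a'\neq M$, and Step 1 gives some $a'_i<M_i$.
\item The singleton threat $y_{\{i\}}=w_i$ then blocks.
\end{enumerate}
Your other proposed reading, treating $C_f^T$ in the definition as the set of unblocked pairs, also rescues this example. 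It is less suitable, though: in the covering argument of Theorem~\ref{thm3-2}, unblocked closure points such as $(x^*,M)$ would then receive no data.
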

\begin{proof}
Take any $(x,a) \notin C^T_f(\mathcal E)$. Take an arbitrary open neighborhood $O(x,a)$ of $(x,a)$ in $\mathcal{X}^c$. For every $(x',a') \in O(x,a) \setminus C^T_f(\mathcal E)$, we have $x' \notin C_f(\mathcal E)$ or $a' \neq (3,1,1)$. Note that by the definition of $\mathcal X^c$, it holds that $a'_1 \geq 0$, $a'_2 \geq 0$, and $a'_3 \geq 0$.

If $x' \notin C_f(\mathcal E)$, then the three agents cannot all attain their maximum utilities (recall that they are $3$, $1$, and $1$) simultaneously, so
\(
\sum_{i=1}^3 u_i(x'_i) \le 4.
\)
By the definition of $\mathcal X^c$, $\sum_{i=1}^3 a'_i = \sum_{i=1}^3 u_i(x'_i)$, hence
\[
\sum_{i=1}^3 a'_i \le 4.
\]
If $a'_1 < 3$, then the singleton coalition $\{1\}$ has a threat $y_{\{1\}} = w_1$ with $\pi_{\{1\}} = 1$ by Definition \ref{def:local-weighted-TU} such that $\pi_{\{1\}} u_1(y_{\{1\}}) = 3 > a'_1 = \pi_{\{1\}} a'_1$. If $a'_1 = 3$, then from $\sum_{i=1}^3 a'_i \le 4$, we have $a'_2 + a'_3 \le 1$. Hence, there exists $i \in \{2,3\}$ such that $a'_i < 1$, and the corresponding singleton coalition $\{i\}$ has a threat $y_{\{i\}} = w_i$ with $\pi_{\{i\}} = 1$ such that $\pi_{\{i\}} u_i(y_{\{i\}}) = 1 > a'_i = \pi_{\{i\}} a'_i$.

If $a' \neq (3,1,1)$, by the definition of $\mathcal X^c$, $\sum_{i=1}^3 u_i(x'_i) = \sum_{i=1}^3 a'_i$. Since $\max_{x \in X}$ $\sum_{i=1}^3 u_i(x_i)$ $= 5$, we have $\sum_{i=1}^3 a'_i \leq 5$.

If $a'_1 < 3$, then the singleton coalition $\{1\}$ has a threat $y_{\{1\}} = w_1$ with $\pi_{\{1\}} = 1$ such that $\pi_{\{1\}} u_1(y_{\{1\}}) = 3 > a'_1 = \pi_{\{1\}} a'_1$.

If $a'_1 > 3$, then we have $a'_2 + a'_3 < 2$. Then $a'_2 < 1$ or $a'_3 < 1$ holds. Without loss of generality, let $a'_2 < 1$. Then, the singleton coalition $\{2\}$ has a threat $y_{\{2\}} = w_2$ with $\pi_{\{2\}} = 1$ such that $\pi_{\{2\}} u_2(y_{\{2\}}) = 1 > a'_2 = \pi_{\{2\}} a'_2$.

If $a'_1 = 3$, then we have $a'_2 + a'_3 \leq 2$ and $a'_2 \neq 1$ or $a'_3 \neq 1$ (since $a' \neq (3,1,1)$). Without loss of generality, let $a'_2 < 1$. Again, the singleton coalition $\{2\}$ has a threat $y_{\{2\}} = w_2$ with $\pi_{\{2\}} = 1$ such that $\pi_{\{2\}} u_2(y_{\{2\}}) = 1 > a'_2 = \pi_{\{2\}} a'_2$.

Then, for every $(x',a') \in O(x,a) \setminus C^T_f(\mathcal E)$, there exists at least one singleton coalition, say $\{i\}$, which has a threat $y_{\{i\}} = w_i$ with $\pi_{\{i\}} = 1$ such that $\pi_{\{i\}} u_i(y_{\{i\}}) > \pi_{\{i\}} a'_i$. Therefore, the economy is locally TU-fuzzy-coalition stable.
\end{proof}
\end{example}

\section{Concluding remarks}

In this paper, we have provided a further investigation into the non-emptiness of the fuzzy core and, furthermore, introduced the notion of the TU-fuzzy core for pure exchange economies. To the best of our knowledge, this is the first attempt to define a transferable-utility version of the fuzzy core in the context of exchange economies, thereby establishing a link between the fuzzy-core literature and the classical TU-game approach.
Our analysis contributes to the existing literature in several respects.

First, we introduced the notion of an independent-$\pi$-balanced family of subsets of the agent set, which generalizes the classical balancedness conditions. Based on this notion, we constructed two payoff-dependent NTU games that serve as bridges between the fuzzy core (TU-fuzzy core) of an exchange economy and the core of NTU games.

Second, we proposed two stability conditions for exchange economies: local fuzzy-coalition stability and local TU-fuzzy-coalition stability. These conditions are inspired mainly by the coalitional $C$-security and coalitional deviation properties developed in the literature on discontinuous games. Under these stability conditions, we established the non-emptiness of the fuzzy core (Theorem \ref{thm3-1}) and the non-emptiness of the TU-fuzzy core (Theorem \ref{thm3-2}) for economies with quasi-concave and concave utility functions, respectively. Notably, when the economy consists of only two agents, the quasi-concavity assumption for the fuzzy core and the concavity assumption for the TU-fuzzy core can be dispensed with entirely (Corollaries \ref{cor3-1} and \ref{cor3-3}).

Third, we constructed several examples to illustrate the scope and limitations of our results. Example \ref{exam-1} demonstrates that an economy with discontinuous utility functions (neither upper semi-continuous nor lower semi-continuous) may still possess a nonempty fuzzy core. The fuzzy core in the example is nonempty, which follows from Theorem \ref{thm3-1}, as all the prerequisites of the theorem are satisfied. Example \ref{exam-3} further reveals the complexity of the (TU-)fuzzy core: with three agents and quasi-concave (but not concave) utility functions, the economy satisfies local TU-fuzzy-coalition stability and has a nonempty TU-fuzzy core, yet it fails to satisfy local fuzzy-coalition stability while still having a nonempty fuzzy core. These examples indicate that the sufficient conditions we propose are by no means necessary, and there remains a substantial gap between the sufficient conditions and the actual existence of fuzzy cores for discontinuous economies.

Overall, the results presented in this paper contribute to a better understanding of the fuzzy core as a cooperative solution concept for economies with discontinuous fundamentals, and introduce the TU-fuzzy core as a new concept. We hope that the concepts developed here will be useful in future studies of cooperative behavior in fuzzy and discontinuous environments.

%



\begin{thebibliography}{}

\bibitem[Allouch and Predtetchinski, 2008]{allouch2008}
Allouch, N. and Predtetchinski, A. (2008).
\newblock On the non-emptiness of the fuzzy core.
\newblock {\em International Journal of Game Theory}, 37(2):203--210.

\bibitem[Aubin, 1979]{aubin1979}
Aubin, J.-P. (1979).
\newblock {\em Mathematical Methods of Game and Economic Theory}.
\newblock North-Holland, Amsterdam.

\bibitem[Aumann, 1961]{aumann1961}
Aumann, R. (1961).
\newblock The core of a cooperative game without side payments.
\newblock {\em Transactions of the American Mathematical Society},
  98(3):539--552.

\bibitem[Barelli and Meneghel, 2013]{barelli2013}
Barelli, P. and Meneghel, I. (2013).
\newblock A note on the equilibrium existence problem in discontinuous games.
\newblock {\em Econometrica}, 81(2):813--824.

\bibitem[Basile and Scalzo, 2020]{scalzo2020ijgt}
Basile, A. and Scalzo, V. (2020).
\newblock Non-emptiness of the alpha-core: sufficient and necessary conditions.
\newblock {\em International Journal of Game Theory}, 49(4):1143--1153.

\bibitem[Bonnisseau and Iehl{\'e}, 2003]{bonnisseau2003payoff}
Bonnisseau, J.-M. and Iehl{\'e}, V. (2003).
\newblock Payoff-dependent balancedness and cores.
\newblock {\em Cahiers dela MSE S{\'e}rie Bleue \#2003.45 Universite Paris-I}.

\bibitem[Bonnisseau and Iehl{\'e}, 2007]{bonnisseau2007payoff}
Bonnisseau, J.-M. and Iehl{\'e}, V. (2007).
\newblock Payoff-dependent balancedness and cores.
\newblock {\em Games and Economic Behavior}, 61(1):1--26.

\bibitem[Chen et~al., 2024]{chen2024existence}
Chen, T., Chen, K., and Zhang, Y. (2024).
\newblock Existence and continuity theorems of $\alpha$-core of
  multi-leader-follower games with set payoffs.
\newblock {\em Journal of Computational and Applied Mathematics}, 439:115610.

\bibitem[Chi and Song, 2026]{Chi2026}
Chi, X.-Y. and Song, Q.-Q. (2026).
\newblock The existence and hadamard well-posedness of cooperative equilibria
  for discontinuous socially structured games.
\newblock {\em Journal of Mathematical Analysis and Applications},
  558(2):130430.

\bibitem[Debreu and Scarf, 1963]{debreu1963}
Debreu, G. and Scarf, H. (1963).
\newblock A limit theorem on the core of an economy.
\newblock {\em International Economic Review}, 4(3):235--246.

\bibitem[Edgeworth, 1881]{edgeworth1881}
Edgeworth, F.~Y. (1881).
\newblock {\em Mathematical Psychics: An Essay on the Application of
  Mathematics to the Moral Sciences}.
\newblock C. Kegan Paul \& Co., London.

\bibitem[Florenzano, 2003]{florenzano2003}
Florenzano, M. (2003).
\newblock {\em General Equilibrium Analysis: Existence and Optimality
  Properties of Equilibria}.
\newblock Kluwer, Boston.

\bibitem[Herv{\'e}s-Beloso et~al., 2026]{herves2026}
Herv{\'e}s-Beloso, C., Herv{\'e}s-Est{\'e}vez, J., and Salas-Guti{\'e}rrez, A.
  (2026).
\newblock On {Aubin}'s veto mechanism: {A} survey.
\newblock {\em Journal of Public Economic Theory}, 28(4):e70127.

\bibitem[Li and Jia, 2023]{jia2023fuzzy}
Li, Y. and Jia, W. (2023).
\newblock Existence and well-posedness of the $\alpha$-core for generalized
  fuzzy games.
\newblock {\em Fuzzy Sets and Systems}, 458:108--117.

\bibitem[Liu, 2022]{liu2022}
Liu, J. (2022).
\newblock Non-emptiness of the fuzzy core in a finite production economy with
  infinite-dimensional commodity space.
\newblock {\em Fuzzy Sets and Systems}, 447:123--136.

\bibitem[Liu, 2023]{liu2023}
Liu, J. (2023).
\newblock Necessary and sufficient conditions for solutions of the {Ky Fan}
  minimax inequality and the non-emptiness of fuzzy cores in economies.
\newblock {\em Fuzzy Sets and Systems}, 467:108580.

\bibitem[Liu and Liu, 2013]{liu2013}
Liu, J. and Liu, X. (2013).
\newblock A necessary and sufficient condition for an {NTU} fuzzy game to have
  a non-empty fuzzy core.
\newblock {\em Journal of Mathematical Economics}, 49(2):150--156.

\bibitem[Marakulin, 2022]{marakulin2022}
Marakulin, V. (2022).
\newblock On the existence of a fuzzy core in an exchange economy.
\newblock In {\em Lecture Notes in Computer Science}. Springer.

\bibitem[McLennan et~al., 2011]{mclennan2011}
McLennan, A., Monteiro, P., and Tourky, R. (2011).
\newblock Games with discontinuous payoffs: a strengthening of {Reny}'s
  existence theorem.
\newblock {\em Econometrica}, 79(5):1643--1664.

\bibitem[Nessah and Tian, 2016]{NessahTian2016}
Nessah, R. and Tian, G. (2016).
\newblock On the existence of {N}ash equilibrium in discontinuous games.
\newblock {\em Economic Theory}, 61(3):515--540.

\bibitem[Predtetchinski, 2005]{predtetchinski2005fuzzy}
Predtetchinski, A. (2005).
\newblock The fuzzy core and the-balanced core.
\newblock {\em Economic Theory}, 26(3):717--724.

\bibitem[Predtetchinski and Herings, 2004]{Hering2004necessary}
Predtetchinski, A. and Herings, P. J.-J. (2004).
\newblock A necessary and sufficient condition for non-emptiness of the core of
  a non-transferable utility game.
\newblock {\em Journal of Economic Theory}, 116(1):84--92.

\bibitem[Prokopovych, 2011]{Prokopovych2011}
Prokopovych, P. (2011).
\newblock On equilibrium existence in payoff secure games.
\newblock {\em Economic Theory}, 48(1):5--16.

\bibitem[Reny, 1999]{reny1999}
Reny, P. (1999).
\newblock On the existence of pure and mixed strategy {Nash} equilibria in
  discontinuous games.
\newblock {\em Econometrica}, 67(5):1029--1056.

\bibitem[Scalzo, 2020]{scalzo2020existence}
Scalzo, V. (2020).
\newblock Existence of strong equilibrium in discontinuous games.
\newblock {\em Journal of Mathematical Analysis and Applications},
  491(1):124247.

\bibitem[Shapley, 1973]{shapley1973balanced}
Shapley, L.~S. (1973).
\newblock On balanced games without side payments.
\newblock In {\em Mathematical programming}, pages 261--290. New York: Academic
  Press.

\bibitem[Song, 2024]{song2024structure}
Song, Q.-Q. (2024).
\newblock On the structure of core solutions of discontinuous general
  cooperative games.
\newblock {\em International Journal of Game Theory}, 53(1):105--125.

\bibitem[Song et~al., 2024]{song2024alpha}
Song, Q.-Q., Guo, M., and Chi, X.-Y. (2024).
\newblock The {$\alpha$}-core in a multi-objective game with set payoffs.
\newblock {\em Mathematical Social Sciences}, 131:32--39.

\bibitem[Uyan{\i}k, 2015]{uyanik2015}
Uyan{\i}k, M. (2015).
\newblock On the nonemptiness of the $\alpha$-core of discontinuous games:
  Transferable and nontransferable utilities.
\newblock {\em Journal of Economic Theory}, 158:213--231.

\bibitem[Yang and Wang, 2018]{yang2018fuzzygames}
Yang, Z. and Wang, A. (2018).
\newblock Existence and stability of the $\alpha$-core for fuzzy games.
\newblock {\em Fuzzy Sets and Systems}, 341:59--68.

\bibitem[Zhao, 1999]{zhao1999b}
Zhao, J. (1999).
\newblock The existence of {TU} $\alpha$-core in normal form games.
\newblock {\em International Journal of Game Theory}, 28(1):25--34.

\end{thebibliography}




%
\end{document}